\crefname{equation}{}{}
\Crefname{lem}{Lemma}{Lemmas}   % type, singular, plural
\newtheorem{thm}{Theorem}[section]
\theoremstyle{remark}
\newtheorem*{rem}{Remark}
\newtheorem*{rems}{Remarks}
\renewcommand{\theequation}{\thesection.\arabic{equation}}
\numberwithin{equation}{section}
\newcommand{\fock}{\mathcal{F}}		% fock space symbol
\newcommand{\di}{{\textnormal{d}}}		% differential (for integrals)
\newcommand{\Mcal}{\mathcal{M}}
\newcommand{\Tbb}{\mathbb{T}}
\newcommand{\W}{W}
\newcommand{\Wt}{\widetilde{W}}
\newcommand{\D}{D}
\newcommand{\Ecal}{\mathcal{E}}
\newcommand{\Hbb}{\mathbb{H}}
\newcommand{\Ncal}{\mathcal{N}}		% calligraphic N
\newcommand{\Hcal}{\mathcal{H}}		% calligraphic H
\newcommand{\Ical}{\mathcal{I}}
\newcommand{\Ikp}{\Ical_{k}^{+}}
\newcommand{\Ik}{\Ical_{k}}
\newcommand{\Ocal}{\mathcal{O}}		% big-O, order-of
\newcommand{\hc}{\textnormal{h.c.}}		%hermitian conjugate
\newcommand{\Rbb}{\mathbb{R}}		% real numbers
\newcommand{\Cbb}{\mathbb{C}}		% complex numbers
\newcommand{\Nbb}{\mathbb{N}}		% natural numbers
\newcommand{\Zbb}{\mathbb{Z}}
\newcommand{\Dbb}{\mathbb{D}}
\newcommand{\id}{\mathbb{I}}
\newcommand{\norm}[1]{\lVert#1\rVert}	%Norm
\newcommand{\tr}{\operatorname{tr}}
\newcommand{\sgn}{\operatorname{sgn}}
\newcommand{\tagg}[1]{ \stepcounter{equation} \tag{\theequation} \label{eq:#1} } % add tag and label in align*-environments
\newcommand{\BFc}{B_\F^c}
\newcommand{\BF}{B_\F}
\newcommand{\kappaf}{\kappa}
\newcommand{\north}{\Gamma^{\textnormal{nor}}}
\newcommand{\F}{\textnormal{F}} % abbreviation for Fermi
\newcommand{\B}{\textnormal{B}} % abbreviation for Bose
\newcommand{\supp}{\operatorname{supp}}
\title{Effective Dynamics of Interacting Fermions from Semiclassical Theory to the Random Phase Approximation} 
\author[]{Niels Benedikter}
\affil[]{Universit\`a degli Studi di Milano, Dipartimento di Matematica, Via Cesare Saldini 50, 20133 Milano, Italy\\ORCID: \href{https://orcid.org/}{0000--0002--1071--6091}, e--mail: \href{mailto:niels.benedikter@unimi.it}{niels.benedikter@unimi.it}}  
\begin{document}
\maketitle

\begin{abstract}
I review results concerning the derivation of effective equations for the dynamics of interacting Fermi gases in a high--density regime of mean--field type. Three levels of effective theories, increasing in precision, can be distinguished: the semiclassical theory given by the Vlasov equation, the mean--field theory given by the Hartree--Fock equation, and the description of the dominant effects of non--trivial entanglement by the random phase approximation. Particular attention is given to the discussion of admissible initial data, and I present an example of a realistic quantum quench that can be approximated by Hartree--Fock dynamics.
\end{abstract}

\tableofcontents

\section{Interacting Fermi Gases at High Density}
Interacting fermions make up much of our world, from metals and semiconductors to neutron stars. Their quantum mechanical description is very complicated because a system of N particles is described by vectors in the (antisymmetrized) $N$--fold tensor product of $L^2(\Rbb^{3})$. As the particle number $N$ is usually huge (easily of the order of $10^{23}$), the Schrödinger equation becomes quickly inaccessible by numerical methods. Effective evolution equations provide a solution: in certain idealized physical regimes they allow an efficient approximation in terms of simpler theories, where ``simpler'' may mean of lower numerical complexity or even explicitly solvable. In this review I present different effective descriptions of the time evolution providing increasing precision of approximation.

\medskip

In this section I introduce the starting point of the quantum mechanical investigation, i.\,e., the fundamental description in terms of the Schrödinger equation. Moreover I discuss the high--density physical regime modelled as a coupled mean--field and semiclassical scaling limit. In the further sections I review, in order of increasing precision of approximation, recent results in the derivation of effective evolution equations. I proceed from the semiclassical approximation (the Vlasov equation) over the mean--field approximation (the Hartree--Fock equation) to the random phase approximation (formulated as bosonization).

\medskip

\paragraph{Schrödinger Equation} The quantum mechanical description is given by the Hamiltonian
\begin{equation} \label{eq:hamiltonian}
 H := - \sum_{i=1}^N \Delta_{x_i} + \lambda \sum_{1 \leq i < j \leq N} V(x_i - x_j)\;, \quad \lambda \in \Rbb\;,
\end{equation}
acting as a self--adjoint operator on $L^2(\Rbb^3)^{\otimes N} \simeq L^2(\Rbb^{3N})$, or more precisely, since we consider fermions, on its antisymmetric subspace; i.\,e., on functions $\psi \in L^2(\Rbb^{3N})$ satisfying
\begin{equation} \label{eq:antisymmetry}
\psi(x_1,x_2,\ldots,x_N) = \sgn(\sigma) \psi(x_{\sigma(1)},x_{\sigma(2)},\ldots,x_{\sigma(N)}) \quad \textnormal{for } \sigma \in \mathcal{S}_N\;.
 \end{equation}
This subspace will be denoted $L^2_\textnormal{a}(\Rbb^{3N})$.
The Hamiltonian generates the dynamics of the system according to the Schrödinger equation: given initial data $\psi_0 \in L^2_\textnormal{a}(\Rbb^{3N})$, the evolution is given by the solution to
\begin{equation}\label{eq:schroedinger}
 i\frac{\di \psi}{\di t}(t) = H \psi(t)\;, \qquad \psi(0) = \psi_0\;.
\end{equation}
If the initial data $\psi_0$ is antisymmetric, so is the solution $\psi(t)$ at all times $t \in \Rbb$.

In this review I discuss the approximation of solutions to \cref{eq:schroedinger} by simpler initial value problems. This of course depends on the choice of initial data, and I will dedicate particular attention to the discussion of the physically most important classes of initial data.

\paragraph{Mean--Field and Semiclassical Scaling Regime}
The Hamiltonian \cref{eq:hamiltonian} describes an extremely wide variety of physical systems, depending on the parameters such as the choice of the interaction potential $V$, of the sign and size of the coupling constant $\lambda$, the density, and the initial data. No approximation can describe all regimes; therefore we impose a specific choice of the parameters. The simplest case are mean--field type scaling regimes: a large number ($N \to \infty$) of particles in a fixed volume (whose size is defined by restricting $\Rbb^3$ to a domain such as a box with periodic boundary conditions (the torus) or assuming the initial data to be rapidly decaying), with the interaction strength $\lambda$ assumed to be so small that the many small contributions of particle pair interactions sum to an effective external potential (the so--called mean field). The effective potential itself depends on the wave function $\psi$, making the effective description non--linear.

Let us derive the precise choice of parameters. For this argument we restrict attention to the torus, i.\,e., $H$ acting on $L^2(\Tbb^{3N})$, where $\Tbb^3 := \Rbb^3/2\pi\Zbb^3$. The simplest imaginable wave function in the antisymmetric subspace is a Slater determinant of plane waves
\begin{equation}
\psi(x_1,x_2,\ldots x_N) := (N!)^{-1/2} \det(f_j(x_i))\;, \quad \textnormal{ with } f_j(x) := (2\pi)^{-3/2} e^{i k_j \cdot x} \textnormal{ for }x\in \Tbb^3\;.  \label{eq:fermiball}
\end{equation}
If $B_\F := \{ k \in \Zbb^3 : \lvert k\rvert \leq k_\F \}$ for some $k_\F >0$, and $N := \lvert B_\F \rvert$, then the Slater determinant formed by the plane waves $k_j \in B_\F$ is the unique minimizer of the non--interacting Hamiltonian $H= - \sum_{i=1}^N \Delta_{x_i}$. The kinetic energy is then, since $k_\F \sim N^{1/3}$, of the order
\[
\langle \psi, H \psi \rangle = \sum_{k\in B_\F} \lvert k\rvert^2 \sim N^{5/3}\;.
\]
Now let us bring the interaction back into the game.
How small should $\lambda$ be? To have a system in which neither the kinetic energy nor the interaction (as a sum over all pairs typically of order $N^2$) dominates the behavior, we choose
\[
\lambda := N^{-1/3}\;.
\]
Since typical momenta (those close to the ``surface'' of the Fermi ball $B_\F$, and thus the most susceptible to the interaction) are of order $\lvert k\rvert \sim k_\F \sim N^{1/3}$, also the typical velocities of these particles are of order $N^{1/3}$, while the length of the system is $2\pi$. So it is not a severe restriction to look only at short times of order $N^{-1/3}$; rescaling the time variable accordingly, the Schrödinger equation \cref{eq:schroedinger} becomes
\[
 iN^{1/3}\frac{\di \psi}{\di t}(t) =  \left[\sum_{i=1}^N - \Delta_{x_i} + N^{-1/3} \sum_{1 \leq i < j \leq N} V(x_i - x_j) \right] \psi(t)\;.
\]
Introducing the parameter
\[\hbar := N^{-1/3}\]
and multiplying the whole equation by $\hbar^2$, we find a form reminiscent of a naive mean--field scaling limit (having coupling constant $1/N$) and a semiclassical scaling limit (effective Planck constant $\hbar\to 0$):
\begin{equation}\label{eq:coreeq}
 i\hbar\frac{\di \psi}{\di t}(t) =  \left[\sum_{i=1}^N - \hbar^2 \Delta_{x_i} + \frac{1}{N} \sum_{1 \leq i < j \leq N} V(x_i - x_j) \right] \psi(t)\;.
\end{equation}
One expects that the broad idea of the argument is equally applicable, but of course not explicit, for fermions initially placed in a confining potential in $\Rbb^3$ instead of on the torus. Therefore \cref{eq:coreeq} will be the form of the Schrödinger equation I discuss in all of the present review. The scaling presented here was introduced by \cite{NS81,Spo81}.

\paragraph{Reduced Density Matrices}
Associated to $\psi \in L^2_\textnormal{a}(\Rbb^{3N})$ there is the density matrix $\lvert \psi \rangle \langle \psi\rvert$, i.\,e., in Dirac bra--ket notation the projection operator on the subspace spanned by $\psi$. Given a $N$--particle observable $A$, i.\,e., a self--adjoint operator $A$ acting in $L^2_\textnormal{a}(\Rbb^{3N})$, its expectation value may be computed by
\[\langle \psi, A \psi\rangle = \tr_N \big( \lvert \psi\rangle \langle \psi \rvert A \big) \;,\]
the trace being over $L^2_\textnormal{a}(\Rbb^{3N})$. Easier to observe are the averages over all particles of a one--particle observable. That is, if $a$ is a self--adjoint operator acting in $L^2(\Rbb^3)$, and we write $a_j$ for the operator $a$ acting on the $j$--th of $N$ particles (i.\,e., $a_j = \id \otimes \cdots \otimes a \otimes \id \otimes \cdots \otimes \id$), one considers the expectation value
\[\frac{1}{N} \sum_{j=1}^N \langle \psi, a_j \psi\rangle = \langle \psi, a_1 \psi\rangle = \tr_1 \big( \big( \tr_{N-1} \lvert \psi\rangle\langle \psi\rvert \big) a \big)\;;\]
for the first equality we used the antisymmetry \cref{eq:antisymmetry}, and $\tr_{N-1}$ is the partial trace over $N-1$ particles (i.\,e., over $N-1$ tensor factors). The quantity
\[N \tr_{N-1} \lvert \psi\rangle\langle \psi\rvert =: \gamma^{(1)}_\psi\]
(note the normalization factor $N$; in many conventions this is chosen to be $1$ instead) is called the one--particle reduced density matrix of $\psi$; it is an operator acting in the one--particle space $L^2(\Rbb^3)$. In the analysis of many--body quantum problems, the reduced density matrices are often the most natural quantities to study, as the next two sections will confirm. Since $\gamma^{(1)}_\psi$ is a self--adjoint trace class operator, it has a spectral decomposition
\[\gamma^{(1)}_\psi = \sum_{j \in \Nbb} \lambda_j \lvert \varphi_j\rangle \langle \varphi_j \rvert\;, \quad \varphi_j \in L^2(\Rbb^3)\;, \quad \lambda_j \in \Rbb\;.\]
This may be used to define the integral kernel of the one--particle reduced density matrix and in particular its ``diagonal'' (the latter physically corresponding to the density of particles expected at position $x \in \Rbb^3$)
\[\gamma^{(1)}_\psi(x;x') := \sum_{j \in \Nbb} \lambda_j \varphi_j(x)  \overline{\varphi_j(x')}\;, \qquad \gamma^{(1)}_\psi(x;x) := \sum_{j \in \Nbb} \lambda_j \lvert \varphi_j(x)\rvert^2 \;.\]

Assuming that the many--body state is a Slater determinant
\[
\psi(x_1,x_2,\ldots x_N) = (N!)^{-1/2} \det(\varphi_j(x_i))\quad \textnormal{ with arbitrary } \varphi_j \in L^2(\Rbb^3)\;, 
\]
the many--body state and the one--particle reduced density matrix are in one--to--one correspondence (up to multiplication by a phase). In fact, the one--particle reduced density matrix of a Slater determinant is a rank--$N$ projection operator on $L^2(\Rbb^3)$, i.\,e.,
\begin{equation} \label{eq:gamma1}
\gamma^{(1)}_\psi = \sum_{j=1}^N \lvert \varphi_j\rangle \langle \varphi_j \rvert\;.
\end{equation}
Conversely, given a rank--$N$ projection operator, we can compute its spectral decomposition \cref{eq:gamma1} to find the orbitals $\varphi_j$; using the orbitals one can write down the corresponding Slater determinant.

\section{The Semiclassical Theory: Vlasov Equation}
\label{sec:vlasov}
The first level of approximation is provided by semiclassical theory. While the state of the quantum system is described by a vector $\psi \in L^2_\textnormal{a}(\Rbb^{3N})$, a classical system is described by a particle density $f: \Rbb^3 \times \Rbb^3 \to [0,\infty)$ on phase space. This is, $f(x,p)$ describes the fraction of particles which are at position $x \in \Rbb^3$ and have momentum $p \in \Rbb^3$; as a probability density, $f$ should satisfy $f(x,p) \geq 0$ and $\int_{\Rbb^3 \times \Rbb^3} f(x,p) \di x \di p=1$.

\paragraph{Vlasov Equation}  The expected classical evolution equation for $f$ is the Vlasov equation
\begin{equation}\label{eq:vlasov}
\frac{\partial f}{\partial t}(t) + 2 p\cdot \nabla_x f(t) = -F(f(t))\cdot \nabla_v f(t)\;,
\end{equation}
where the mean--field force $F$ is given by $F(f(t)) := - \nabla(V \ast \rho_{f(t)})$, the position space particle density appearing here being $\rho_{f(t)}(x) := \int f(t,x,p) \di p$.

\paragraph{Wigner Function} The key idea of the semiclassical approximation is to associate a function $W_\psi: \Rbb^3 \times \Rbb^3 \to \Rbb$ to a vector $\psi \in L^2_\textnormal{a}(\Rbb^{3N})$. One then assumes $\psi$ to be a solution of the time--dependent Schrödinger equation \cref{eq:coreeq} and considers the evolution of $W_\psi$ in the semiclassical limit of Planck constant $\hbar \to 0$. A common choice is the Wigner function
\begin{equation}\label{eq:wigner}
W_\psi(x,p) := \frac{1}{(2\pi)^3} \int e^{-ip\cdot y/\hbar}\; \gamma^{(1)}_\psi\Big(x + \frac{y}{2}; x - \frac{y}{2}\Big) \; \di y\;.
\end{equation}
Also in the Wigner function we consider $\hbar = N^{-1/3}$. The Wigner function satisfies all the properties of a probability density on phase space, except that it usually has negative parts \cite{SC83,BW95}. The relation between the one--particle density matrix and the Wigner function is inverted by the Weyl quantization:
\begin{equation}
 \gamma^{(1)}_\psi(x;y) = N \int  W_\psi\Big( \frac{x+y}{2},p \Big) e^{ip\cdot(x-y)/\hbar} \di p\;.
\end{equation}

\medskip

The Vlasov equation as an approximation to the fermionic many--body dynamics of pure states is justified by the following theorem.
\begin{thm}[{Vlasov Dynamics, combining \cref{thm:hf} below and \cite[Theorem~2.4]{BPSS16}}]
 Assume that $V \in L^1(\Rbb^3)$ and $\int \lvert \hat{V}(p)\rvert (1+\lvert p\rvert^3) \di p < \infty$. Let $\omega_N$ be a sequence of rank--$N$ projection operators on $L^2(\Rbb^3)$, and assume there exists $C >0$ such that for all $i \in \{1,2,3\}$ the sequence satisfies
\begin{equation} \label{eq:semiclass_vla}
 \lVert [x_i,\omega_N] \rVert_\textnormal{tr} \leq C N \hbar \;, \qquad \lVert [p_i,\omega_N] \rVert_\textnormal{tr} \leq C N \hbar\;,
\end{equation}
where $x_i$ is the position operator and $p_i = -i\hbar\nabla_i$ the momentum operator. Let $\psi_{0}$ be the Slater determinant corresponding to $\omega_N$. 
Assume that we have $W^{1,1}$--regularity uniformly with respect to $N$, i.\,e., there exists $C >0$ such that
\begin{equation} \label{eq:W11}
\norm{W_{\psi_{0}}}_{W^{1,1}} := \sum_{\lvert \beta\rvert \leq 1} \int \lvert \nabla^\beta W_{\psi_{0}}(x,p) \rvert \di x\di p \leq C \;. 
\end{equation}
Let $\gamma^{(1)}(t)$ be the one--particle reduced density matrix associated to the solution of the Schrödinger equation, $\psi(t) := e^{-i H t/\hbar} \psi_{0}$. Let $f(t)$ be the solution of the Vlasov equation with initial data $f(0) := W_{\psi_{0}}$, and $\omega^\textnormal{Vlasov}(t)$ the Weyl quantization of $f(t)$.

Then there exists $C, c_1, c_2 > 0$ such that
\begin{equation}\label{eq:vlasovest}
 \lvert \operatorname{tr} e^{i(\alpha\cdot x + \beta \cdot p)} \Big( \gamma^{(1)}(t) - \omega^{\textnormal{Vlasov}}(t) \Big) \rvert \leq C N\hbar (1+\lvert \alpha\rvert + \lvert \beta\rvert) \exp(c_2 \exp(c_1 \lvert t\rvert))
\end{equation}
for all $\alpha,\beta \in \Rbb^3$ and all $t \in \Rbb$.
\end{thm}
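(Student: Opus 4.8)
The plan is to compare the Schrödinger and Vlasov evolutions by routing through the Hartree--Fock evolution. Let $\omega^{\textnormal{HF}}(t)$ be the solution of the Hartree--Fock equation with initial datum $\omega^{\textnormal{HF}}(0)=\omega_N$. Since $\omega^{\textnormal{Vlasov}}(0)$ is the Weyl quantization of $f(0)=W_{\psi_{0}}$, the inversion formula for \cref{eq:wigner} gives $\omega^{\textnormal{Vlasov}}(0)=\gamma^{(1)}_{\psi_{0}}=\omega_N$, so all three one--particle density matrices agree at $t=0$. I would then write
\[
\gamma^{(1)}(t)-\omega^{\textnormal{Vlasov}}(t)=\big(\gamma^{(1)}(t)-\omega^{\textnormal{HF}}(t)\big)+\big(\omega^{\textnormal{HF}}(t)-\omega^{\textnormal{Vlasov}}(t)\big)
\]
and bound each term after testing against $e^{i(\alpha\cdot x+\beta\cdot p)}$.

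For the first term I would apply \cref{thm:hf}, whose hypotheses are exactly the semiclassical commutator bounds \eqref{eq:semiclass_vla} assumed here; it controls $\gamma^{(1)}(t)-\omega^{\textnormal{HF}}(t)$ at order $N\hbar$ (with at worst exponential growth in $\lvert t\rvert$), in trace norm or already in tested form. As $e^{i(\alpha\cdot x+\beta\cdot p)}$ is unitary, a trace--norm estimate yields $\lvert\tr\,e^{i(\alpha\cdot x+\beta\cdot p)}(\gamma^{(1)}(t)-\omega^{\textnormal{HF}}(t))\rvert\le\norm{\gamma^{(1)}(t)-\omega^{\textnormal{HF}}(t)}_{\textnormal{tr}}$, which is already of the required shape (indeed with no dependence on $\alpha,\beta$).

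For the second term I would invoke \cite[Theorem~2.4]{BPSS16}, all of whose hypotheses are in force: $V\in L^1(\Rbb^3)$ with $\int\lvert\hat V(p)\rvert(1+\lvert p\rvert^3)\,\di p<\infty$ is assumed, the commutator bounds \eqref{eq:semiclass_vla} hold at time $0$ and are propagated along the flow inside that theorem, and \eqref{eq:W11} provides the $W^{1,1}$ regularity of $f(0)$ needed for a global Vlasov solution with the derivative bounds used there. This gives $\lvert\tr\,e^{i(\alpha\cdot x+\beta\cdot p)}(\omega^{\textnormal{HF}}(t)-\omega^{\textnormal{Vlasov}}(t))\rvert\le CN\hbar(1+\lvert\alpha\rvert+\lvert\beta\rvert)\exp(c_2\exp(c_1\lvert t\rvert))$, the double exponential coming from the growth of the derivatives of the Vlasov solution. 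If \cref{thm:hf} is formulated with the exchange term present whereas \cite{BPSS16} uses the Hartree equation, one first discards the exchange contribution, which is of lower order in this scaling.

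Summing the two bounds --- the exponential growth of the first being dominated by the double exponential of the second --- yields \eqref{eq:vlasovest}. In this form the argument is essentially bookkeeping: the only points requiring care are that \cref{thm:hf} and \cite[Theorem~2.4]{BPSS16} apply under the common hypotheses \eqref{eq:semiclass_vla}--\eqref{eq:W11} and that both errors are of the same order $N\hbar$ with compatible time growth, so the triangle inequality closes with a single constant. The genuine difficulty lies inside the two cited theorems --- propagating the semiclassical structure along the Hartree--Fock flow, and the well--posedness and regularity theory for the Vlasov equation --- so the main obstacle in a self--contained treatment would be establishing \cref{thm:hf} itself with a semiclassically sharp error of order $N\hbar$ rather than the cruder bounds that follow from the original mean--field derivations.
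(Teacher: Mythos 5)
Your decomposition through the Hartree--Fock evolution, bounding $\gamma^{(1)}(t)-\omega^{\textnormal{HF}}(t)$ by \cref{thm:hf} and $\omega^{\textnormal{HF}}(t)-\omega^{\textnormal{Vlasov}}(t)$ by \cite[Theorem~2.4]{BPSS16} after testing with the unitary $e^{i(\alpha\cdot x+\beta\cdot p)}$, is precisely the combination the paper intends (the theorem is explicitly stated as ``combining'' these two results). One cosmetic correction: \cref{thm:hf} gives the trace--norm error $CN^{1/6}\exp(c_2\exp(c_1\lvert t\rvert))$, which is both smaller than $N\hbar=N^{2/3}$ and already doubly exponential in time (not merely exponential), so your bookkeeping closes with room to spare.
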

\begin{rems}
 \begin{enumerate} 
 \item Note that $\norm{\gamma^{(1)}(t)}_{\tr} = \norm{\omega^\textnormal{Vlasov}(t)}_{\tr} = N$; the bound \cref{eq:vlasovest} is non--trivial, showing that their difference (at least when tested with the observable $e^{i(\alpha\cdot x + \beta\cdot p)}$, $x$ being the position operator and $p$ the momentum operator) is by $\hbar = N^{-1/3}$ smaller.
  \item There are two lines of proof for the derivation of the Vlasov equation. One may directly take the step from the many--body quantum theory to the Vlasov equation \cite{NS81,Spo81}, or one first derives (as discussed in the next section) the time--dependent Hartree--Fock equation \cref{eq:hf} with bounds uniform in $\hbar$ before taking the limit $\hbar \to 0$ of the solution of the Hartree--Fock equation \cite{BPSS16} (with weaker error estimate also \cite{APPP11a}).
  \item \label{remX} For the latter step, from the Hartree--Fock to the Vlasov equation as $\hbar \to 0$, more singular interaction potentials may be treated when considering mixed states as initial data \cite{Saf20,Saf20a,Saf21,LS21,CLS22}. In that case one has only $0 \leq \omega_N \leq 1$ but not $\omega_N = \omega_N^2$.
  \item Alternatively, convergence of Hartree--Fock solutions to Vlasov solutions with singular interaction potential has also been proved in \cite{LP93,MM93} (without exchange term) and \cite{GIMS98} (for the full Hartree--Fock equation), however only as weak convergence. Explicit bounds using the semiclassical Wasserstein pseudo--distance \cite{GP21} where later obtained by \cite{Laf19,Laf21}.
  \item In \cite{PP09,AKN13,AKN13a} expansions of the solution of the Hartree--Fock equation in powers of $\hbar$, with leading order given by the Vlasov equation, have been derived.
 \end{enumerate}
\end{rems}

\paragraph{Initial Data}
The construction of initial data satisfying all the assumptions is non--trivial. On the one hand, one may use coherent states \cite{BPSS16} (Gaussian wave packets with momentum roughly localized around $p \in \Rbb^3$ and position roughly localized around $r \in \Rbb^3$) of the form
\[f_{r,p}(x) := \hbar^{-3/2} e^{-i p\cdot x/\hbar} \frac{e^{-(x-r)^2/2\delta^2}}{(2\pi \delta^2)^{3/4}} \;, \quad x \in \Rbb^3\,,\ \delta > 0\,,\]
to define with some probability density $M \in W^{1,1}(\Rbb^3 \times \Rbb^3)$ the sequence of density matrices
\[\omega_N(x;y) = \int M(r,p) f_{r,p}(x) \overline{f_{r,p}(y)}\;.\]
 One easily sees that by this construction we satisfy \cref{eq:semiclass_vla} and \cref{eq:W11}, but generally this form of $\omega_N$ is not the one--particle reduced density matrix of a pure $N$--particle state.
 
 On the other hand, if $\omega_N$ is a rank--$N$ projection such as the one--particle reduced density matrix of the ground state of non--interacting fermions in a trapping potential, semiclassical analysis suggests its Wigner transform to be approximately an indicator function in phase space, with accordingly little regularity. A complete understanding of the admissible initial data, and possibly the extension to a larger class, remain interesting problems.
 
 (For mixed states it is easier to construct initial data with regular Wigner function, see the results mentioned in Remark~\ref{remX} above.)
 
%  This observation underlines the importance of considering mixed states as discussed in \cref{remX}, and correspondingly the derivation of the Hartree--Fock equation for mixed initial states \cite{BJP+16}.

\section{The Mean--Field Theory: Hartree--Fock Equation}
The second, more precise, level of approximation is provided by a quantum theory of mean--field type. Unlike the semiclassical theory, this theory is described in terms of a quantum state, i.\,.e., vector in the many--body Hilbert space. The key simplification is that only a submanifold of states with the minimum amount of correlations compatible with the antisymmetry requirement of indistinguishable fermions is considered. Unlike the many--body Schrödinger equation, the effective evolution equation in this submanifold (the Hartree--Fock equation) is non--linear, with the many--body interaction having been replaced by an effective external potential generated by averaging over the position of all other particles.

\paragraph{Hartree--Fock Theory}
The key idea of Hartree--Fock theory is to restrict the quantum many--body problem from $L^2_\textnormal{a}(\Rbb^{3N})$ to the submanifold given by Slater determinants
\begin{equation}\label{eq:slater}
\psi(x_1,x_2,\ldots x_N) = (N!)^{-1/2} \det(\varphi_j(x_i))\;, \quad \textnormal{ with } \varphi_j \in L^2(\Rbb^3)\;. 
\end{equation}
The choice of the orbitals $\varphi_j$ is to be optimized in Hartree--Fock theory. (The restriction compared to the full space $L^2_\textnormal{a}(\Rbb^{3N})$ consists of not permitting linear combinations of Slater determinants.) 
 The time--dependent Schr\"odinger equation for the evolution of $\psi$ can be locally projected onto the tangent space of this submanifold (illustrated in \cref{fig:df});
\begin{figure}
\centering    
 \includegraphics[trim={3.3cm 22.5cm 10cm 3cm},clip,scale=1]{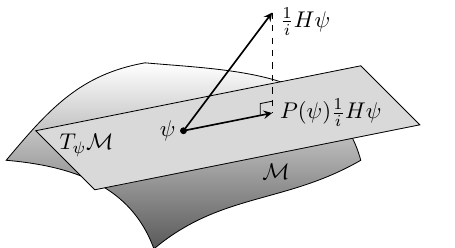}
 \caption{Dirac--Frenkel principle: Consider the Schrödinger equation in a Hilbert space $\Hcal$ and let $\Mcal \subset \Hcal$ be a submanifold. Let $\psi \in \Mcal$. At any ``time step'', $\frac{1}{i}H\psi$ of \cref{eq:schroedinger} is orthogonally projected to the tangent space $T_\psi\Mcal$, yielding an evolution in $\Mcal$. {\footnotesize Figure adapted from \cite{Lub08,BSS18}.}}
 \label{fig:df}
\end{figure}
this gives rise to the system of time--dependent Hartree--Fock equations for the evolution of the orbitals:
\begin{equation}\label{eq:hforb}
i \hbar \frac{\di \varphi_j(t)}{\di t} = - \hbar^2 \Delta \varphi_j(t) + \frac{1}{N}\sum_{i=1}^N \big(V \ast \lvert \varphi_i(t)\rvert^2\big) \varphi_{j}(t) - \frac{1}{N} \sum_{i=1}^N\big(V \ast (\varphi_{j}(t) \overline{\varphi_i(t)}\big) \varphi_i(t)) \;.
\end{equation}
Since Slater determinants are in one--to--one correspondence with their one--particle reduced density matrices, it is natural to write the time--dependent Hartree--Fock equation \cref{eq:hforb} directly in terms of a one--particle density matrix $\omega_{N}(t) := \sum_{j=1}^N \lvert \varphi_{j}(t)\rangle \langle \varphi_{j}(t) \rvert$: 
\begin{align*}
 & i\hbar \frac{\di \omega_{N}(t)}{\di t} = [-\hbar^2\Delta + (V \ast \rho(t)) -X(t), \omega_{N}(t)]\;, \tagg{hf}\\
 & \textnormal{where} \quad \rho(t)(x) := \omega_N(t)(x;x)\;, \quad X(t)(x;x') := V(x - x') \omega_N(t)(x;x')\;.
\end{align*}
The term $V \ast \rho(t)$, a multiplication operator, is called the direct term. The so--called exchange term $X(t)$ is understood with $X(t)(x;x')$ as the integral kernel of an operator.
The Hartree--Fock equation in terms of a one--particle density matrix may also be derived via a reformulation of the Dirac--Frenkel principle for the reduced density matrix \cite{BSS18}. 

\paragraph{Quantum Quench} The typical experimental situation is a quantum quench: a low--energy state (or even the ground state) of fermions in a confining potential is prepared, then by switching the interaction between particles (e.\,g., via a Feshbach resonance) or by switching the confining external potential, the previously prepared state becomes excited with respect to the switched Hamiltonian, thus exhibiting non--trivial dynamics. This dynamics is then observed. The following theorem proves that such a quench can be described by the time--dependent Hartree--Fock equation. To illustrate the idea we only give the simplest case, in which the initial data is exactly a Slater determinant (one may generalize to initial data containing a small number of particles excited over the Slater determinant).           

\begin{thm}[Hartree--Fock Dynamics, \cite{BPS14b,BPS14}]\label{thm:hf}
Let $V \in L^1(\Rbb^3)$ and $\int \di p (1+\lvert p\rvert)^2 \lvert \hat{V}(p)\rvert < \infty$. Let $\omega_N$ be a sequence of rank--$N$ projection operators on $L^2(\Rbb^3)$, and assume there exists $C >0$ such that for all $i \in \{1,2,3\}$ the sequence satisfies
\begin{equation} \label{eq:semiclass}
 \lVert [x_i,\omega_N] \rVert_\textnormal{tr} \leq C N \hbar \;, \qquad \lVert [p_i,\omega_N] \rVert_\textnormal{tr} \leq C N \hbar\;.
\end{equation}
Let $\psi_{0}$ be the Slater determinant corresponding to $\omega_N$. Let $\gamma^{(1)}(t)$ be the one--particle reduced density matrix associated to the solution of the Schrödinger equation $\psi(t) := e^{-i H t/\hbar} \psi_{0}$. If $\omega(t)$ is the solution of the Hartree--Fock equation \cref{eq:hf} with initial data $\omega_N$, then 
\begin{equation}\label{eq:hfest}
 \norm{ \gamma^{(1)}(t) - \omega(t)}_{\operatorname{tr}} \leq C N^{1/6} \exp(c_2 \exp(c_1 \lvert t\rvert)) \qquad \text{for all }t \in \Rbb\;.
\end{equation}
\end{thm}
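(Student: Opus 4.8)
The plan is to follow the Fock--space approach of \cite{BPS14,BPS14b}: pass to second quantization, ``center'' the Slater determinant by a particle--hole transformation, and control the growth of the expected number of excitations over the evolving Hartree--Fock state. The first preparatory step is to check that the commutator bounds \cref{eq:semiclass} are preserved, up to a constant growing in time, along the Hartree--Fock flow: there are $C,c>0$ with $\norm{[x_i,\omega(t)]}_{\textnormal{tr}}\le CN\hbar\,e^{c\lvert t\rvert}$ and $\norm{[p_i,\omega(t)]}_{\textnormal{tr}}\le CN\hbar\,e^{c\lvert t\rvert}$ for $i=1,2,3$. This is a Gr\"onwall argument: differentiate these commutators using \cref{eq:hf}, expand with the Jacobi identity, and estimate the contributions of the direct and exchange potentials in trace norm after passing to Fourier variables, where the hypothesis $\int(1+\lvert p\rvert)^2\lvert\hat V(p)\rvert\,\di p<\infty$ enters through a second--order Taylor expansion of $e^{ip\cdot x}$. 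It is precisely the $1/N$ coupling in \cref{eq:coreeq} that keeps these right--hand sides of size $N\hbar$, and these propagated bounds are the only way the semiclassical scaling will be used later.

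Next, embed $L^2_\textnormal{a}(\Rbb^{3N})$ into the fermionic Fock space $\fock$, and for each $t$ let $R_t$ be the unitary particle--hole transformation with $R_t\Omega$ equal to the Slater determinant of $\omega(t)$; schematically $R_t^*a(f)R_t=a(u_tf)+a^*(\overline{v_tf})$ with $u_t=1-\omega(t)$ and $v_t$ built from $\omega(t)$, so $\norm{v_t}\HS^2=\tr\,\omega(t)=N$ and $\norm{u_t}_{\op},\norm{v_t}_{\op}\le1$. Define the fluctuation dynamics $\Ucal(t):=R_t^*\,e^{-iHt/\hbar}\,R_0$ and $\xi_t:=\Ucal(t)\Omega$, so that $\xi_0=\Omega$. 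Substituting $\psi(t)=R_t\xi_t$ into $\gamma^{(1)}_{\psi(t)}(x;y)=\scal{\psi(t)}{a^*_ya_x\psi(t)}$ and using the transformation rules, the ``vacuum'' contribution reproduces exactly $\omega(t)(x;y)$, while the remainder is a $\di\Gamma$--type term with kernel $u_tOu_t$, an analogous one built from $v_t$, and two off--diagonal ($a^*a^*$ and $aa$) terms. Splitting $\gamma^{(1)}(t)-\omega(t)$ into its $\omega(t)$-- and $(1-\omega(t))$--blocks, the diagonal blocks are bounded in trace norm by $\langle\xi_t,\Ncal\xi_t\rangle$ directly, while each off--diagonal block has Hilbert--Schmidt norm $\le\langle\xi_t,\Ncal\xi_t\rangle$ (using $\norm{v_t^*v_t}_{\op},\norm{u_t^*u_t}_{\op}\le1$) and rank at most $N$, hence trace norm $\le\sqrt N\,\langle\xi_t,\Ncal\xi_t\rangle$. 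Thus $\norm{\gamma^{(1)}(t)-\omega(t)}_{\textnormal{tr}}\le C\sqrt N\,\langle\xi_t,\Ncal\xi_t\rangle$, and since $\sqrt N\,\hbar=N^{1/6}$ it suffices to prove $\langle\xi_t,\Ncal\xi_t\rangle\le C\hbar\,\exp(c_2\exp(c_1\lvert t\rvert))$.

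To this end write $i\hbar\,\partial_t\Ucal(t)=\mathcal L(t)\Ucal(t)$; after the particle--hole conjugation and using \cref{eq:hf} to cancel the leading quadratic piece, the generator $\mathcal L(t)$ decomposes into an $\hbar\,\di\Gamma$--type term built from $-\Delta$ and the mean--field potentials, a quartic interaction term, and ``off--diagonal'' quadratic and higher terms whose kernels carry a factor $\tfrac1N[e^{ik\cdot x},\omega(t)]$. Then $\tfrac{\di}{\di t}\langle\xi_t,\Ncal\xi_t\rangle=\tfrac1\hbar\langle\xi_t,i[\mathcal L(t),\Ncal]\xi_t\rangle$, where only the number--non--conserving parts of $\mathcal L(t)$ contribute; estimating these by Cauchy--Schwarz in the form $\lvert\scal{\xi}{\iint K(x,y)a^*_xa^*_y\,\xi}\rvert\le\norm{K}\HS\,\langle\xi,\Ncal\xi\rangle^{1/2}\langle\xi,(\Ncal+1)\xi\rangle^{1/2}$, bounding $\norm{K}\HS$ by the propagated commutator estimates from the first step, and using an a priori bound on $\hbar\,\langle\xi_t,\di\Gamma(-\Delta+1)\xi_t\rangle$ (propagated by a companion Gr\"onwall estimate), one arrives at $\tfrac{\di}{\di t}\langle\xi_t,\Ncal\xi_t\rangle\le C(t)\langle\xi_t,\Ncal\xi_t\rangle+C(t)\hbar$ with $C(t)\le Ce^{c\lvert t\rvert}$; integrating with $\langle\xi_0,\Ncal\xi_0\rangle=0$ produces the double exponential and hence \cref{eq:hfest}.

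I expect the last estimate to be the main obstacle: extracting exactly one power of $\hbar$, and no uncontrolled power of $N$, from $[\mathcal L(t),\Ncal]$. This needs the quartic interaction term handled carefully (splitting it, using the decay of $\hat V$, and pairing with the kinetic energy) and, above all, the propagated semiclassical bounds on $[x_i,\omega(t)]$ and $[p_i,\omega(t)]$ to show the off--diagonal part of the generator is small in the relevant norm — without them the naive size of the direct potential alone would give a bound growing with $N$. Closing the coupled Gr\"onwall system for $\langle\Ncal\rangle$ and the kinetic energy simultaneously is the technical heart of the argument.
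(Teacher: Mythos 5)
Your overall architecture coincides with that of the works the paper cites for this theorem (\cite{BPS14b,BPS14}; the review itself states the result without proof): embedding into Fock space, a time--dependent particle--hole transformation $R_t$ built from the Hartree--Fock solution $\omega(t)$, the fluctuation dynamics $\Ucal(t)=R_t^*e^{-iHt/\hbar}R_0$ with $\xi_t=\Ucal(t)\Omega$, propagation of the semiclassical bounds \cref{eq:semiclass} along the Hartree--Fock flow, cancellation of the quadratic off--diagonal terms of the generator through \cref{eq:hf}, and a Gr\"onwall estimate on $\langle\xi_t,\Ncal\xi_t\rangle$ whose input is $\norm{[e^{ip\cdot x},\omega(t)]}_{\textnormal{tr}}\leq CN\hbar(1+\lvert p\rvert)e^{c\lvert t\rvert}$. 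All of this is faithful to the cited proof.

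However, there is a genuine quantitative gap at the decisive step. You reduce \cref{eq:hfest} to the claim $\langle\xi_t,\Ncal\xi_t\rangle\leq C\hbar\exp(c_2\exp(c_1\lvert t\rvert))$, to be obtained from a Gr\"onwall inequality with inhomogeneity $C(t)\hbar$. This is not achievable, and the claimed bound is false for times of order one. After the Hartree--Fock cancellation, the remaining number--non--conserving part of the generator is essentially the purely creating quartic term $\frac{1}{2N}\sum_p\hat V(p)B_p^*\widetilde B_p^*$ with kernels $K_p=u_te^{ip\cdot x}v_t^*$ satisfying $\lVert K_p\rVert_{\textnormal{HS}}^2=\tfrac12\lVert[e^{ip\cdot x},\omega(t)]\rVert_{\textnormal{HS}}^2\leq CN\hbar(1+\lvert p\rvert)e^{c\lvert t\rvert}$, and this is saturated for the Fermi ball. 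Hence its vacuum amplitude, after the factor $\hbar^{-1}$ from $i\hbar\partial_t$, is of size $\frac{1}{N\hbar}\cdot N\hbar=O(1)$ per unit time: the dynamics creates $O(1)$ particle--hole pairs (exactly the correlations the RPA section of the paper describes), so the best available bound --- and what \cite{BPS14b} proves --- is $\langle\xi_t,\Ncal\xi_t\rangle\leq C\exp(c\exp(c\lvert t\rvert))$, with no factor of $\hbar$; note also that a differential inequality $y'\leq C(t)(y+\sqrt{y})$ with $y(0)=0$ does not keep $y$ of size $\hbar$. With the correct $O(1)$ bound, your estimate of the pairing block (trace norm $\leq\sqrt N\times$ Hilbert--Schmidt norm $\leq\sqrt N\,\langle\Ncal\rangle$) only yields $\norm{\gamma^{(1)}(t)-\omega(t)}_{\operatorname{tr}}\lesssim\sqrt N$, not $N^{1/6}$: the $N^{1/6}$ rate in \cite{BPS14b} is not obtained by making $\langle\Ncal\rangle$ small in $N$ but by a finer treatment of the off--diagonal contributions. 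As written, your argument does give the Hilbert--Schmidt analogue (error $O(1)\exp(c\exp(c\lvert t\rvert))$), but not the stated trace--norm estimate \cref{eq:hfest}.
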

\begin{rems}
 \begin{enumerate}
 \item Note that $\norm{\gamma^{(1)}(t)}_{\tr} = \norm{\omega(t)}_{\tr} = N$; their difference is by $N^{-5/6}$ smaller.
 \item The exchange term $X(t)$ in \cref{eq:hf} may be dropped without changing the error bound of \cref{eq:hfest}, see \cite[Appendix~A]{BPS14b}.
  \item A similar theorem can be proven with relativistic kinetic energy $\sqrt{-\hbar^2 \Delta + m^2}$ of massive particles, $m > 0$, replacing $-\Delta$ \cite{BPS14a}.
  \item A similar theorem has first been proven by \cite{EESY04}, under assumption of analytic interaction potential, and with error term controllable for short times.
  \item Singular $V$ have been considered in \cite{PRSS17,Saf18}, however only for translation invariant initial data, which are stationary under the Hartree--Fock evolution.
  \item The Hartree--Fock equation has also been derived for initial data given by a mixed state \cite{BJP+16}. This has been generalized to singular interaction potentials, including the Coulomb potential and the gravitational attraction, at least up to small times, in \cite{CLS21}, and generalized by \cite{CLS22a}. Mixed initial states are particularly important in view of the discussion of admissible initial data concerning the derivation of the Vlasov equation in \cref{sec:vlasov}.
  \item The derivation of Hartree--Fock equations has also been considered in scaling limits where the interaction is weaker \cite{BGGM03,BGGM04,FK11,PP16,BBP+16}.
 \end{enumerate}
\end{rems}

\subsection{Initial Data: Non--Interacting Fermions in a Harmonic Trap}
\label{sec:ho}
In \cref{thm:hf} a key role is played by the assumption that the one--particle reduced density matrix of the initial Slater determinant satisfies the semiclassical commutator bounds \cref{eq:semiclass}. The only example given by \cite{BPS14b} was the initial data constituted by the ground state of non--interacting fermions on a torus, i.\,e., a Slater determinant of planes waves \cref{eq:fermiball} whose momenta form a complete Fermi ball
\begin{equation}\label{eq:BF}
B_\F = \{ k \in \Zbb^3 : \lvert k\rvert \leq k_\F \} \quad \textnormal{for some } k_\F >0\;. 
\end{equation}
In \cite{FM20} it was shown that non--interacting fermions in general confining potentials exhibit the semiclassical structure, the proof using methods of semiclassical analysis. Instead in the following we verify \cref{eq:semiclass} by an explicit computation for non--interacting fermions in a harmonic trap.

\medskip

We consider the Hamiltonian $h$, acting on $L^2(\Rbb^3)$, describing a single particle in a three--dimensional anisotropic harmonic oscillator potential
\begin{equation} \label{eq:ho}
h = \sum_{i=1}^3 \left( p_i^2 + w_i^2 x_i^2 \right)\;, \quad \textnormal{with }w_i > 0 \textnormal{ for }i\in \{1,2,3\}\;.
\end{equation}
 We introduce standard creation and annihilation operators by
\begin{equation}
\label{eq:ca-op}
a_i := \sqrt{\frac{w_i}{2\hbar}}\left( x_i + \frac{i}{w_i} p_i \right),\quad a_i^* := \sqrt{\frac{w_i}{2\hbar}} \left( x_i - \frac{i}{w_i} p_i \right)\,.
\end{equation}
Then the Hamiltonian $h$ becomes diagonal, and we can read off its spectrum:
\[h = \hbar \sum_{i=1}^3 2 w_i \left( a_i^* a_i + \frac{1}{2} \right),\ \quad \sigma(h) = \left\{ \hbar \sum_{i=1}^3 2 w_i \left(n_i + \frac{1}{2}\right) : n_i \in \Nbb \right\}\;.\]

Now consider $N$ non--interacting fermions in a harmonic external potential, i.\,e., as an operator acting in $L^2_\textnormal{a}(\Rbb^{3N})$ we consider the Hamiltonian
\[H = \sum_{j=1}^N h_j\;.\]
(In the language of second quantization this is the operator $\di\Gamma(h)$ on the $N$--particle subspace of the fermionic Fock space $\fock$ over $L^2(\Rbb^3)$.) The ground state of $H$ is the antisymmetrized tensor product of the $N$ lowest energy levels of the one--body Hamiltonian $h$, i.\,e., the eigenfunctions associated with the $n_i$ up to a certain $n_i^\textnormal{max}$ form a Slater determinant. To occupy the eigenfunctions from all three oscillators up to the same energy, assuming without loss of generality $w_1 \leq w_2 \leq w_3$, we take $E>0$ and set
\begin{equation} \label{eq:nmax}
n_1^\textnormal{max} := N^{1/3} E\,, \quad n_2^\textnormal{max} := N^{1/3} E \frac{w_1}{w_2}\,, \quad n_3^\textnormal{max} := N^{1/3} E \frac{w_1}{w_3}\,.
\end{equation}
(To be precise we should round to integer values.)
The one--particle reduced density matrix of the corresponding Slater determinant is
\begin{equation}\label{eq:gammaN}
\omega_N = \sum_{n_1,n_2,n_3 = 0}^{n_1^\textnormal{max},n_2^\textnormal{max},n_3^\textnormal{max}} \lvert n_1,n_2,n_3\rangle \langle n_1,n_2,n_3\rvert\;.
\end{equation}
(Here we have introduced the occupation number representation and Dirac bra--ket notation, i.\,e., $\lvert n_1,n_2,n_3\rangle \langle n_1,n_2,n_3\rvert$ denotes the projection on the tensor product of an eigenfunction to eigenvalue $n_1$, an eigenfunction to $n_2$, and an eigenfunction to $n_3$, this triple tensor product forming a wave function in the one--particle space $L^2(\Rbb^3)$.)

According to the following theorem, non--interacting fermions in a harmonic confinement satisfy the semiclassical commutator bounds used to derive the  Hartree--Fock dynamics.
\begin{thm}[Semiclassical Structure of  Non--Interacting Fermions in a Harmonic Trap]\label{prp:nonint}\ 
There is a $C > 0$ such that for all $i \in \{1,2,3\}$ the one--particle density matrix \cref{eq:gammaN} satisfies
\begin{align}\lVert [x_i,\omega_N] \rVert_\textnormal{tr} \leq C N\hbar    \qquad \text{and} \qquad \lVert [p_i,\omega_N] \rVert_\textnormal{tr} \leq C N\hbar \;.    \label{eq:xcomm}
\end{align}
\end{thm}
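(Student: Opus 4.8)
The plan is to exploit the tensor-product structure of the anisotropic oscillator. Under the identification $L^2(\Rbb^3)\simeq L^2(\Rbb)\otimes L^2(\Rbb)\otimes L^2(\Rbb)$ along the principal axes, the occupation-number eigenstates factorize, $\lvert n_1,n_2,n_3\rangle = \lvert n_1\rangle\otimes\lvert n_2\rangle\otimes\lvert n_3\rangle$, so \cref{eq:gammaN} reads $\omega_N = P_1\otimes P_2\otimes P_3$, where $P_i := \sum_{n=0}^{n_i^\textnormal{max}}\lvert n\rangle\langle n\rvert$ is the spectral projection of the $i$-th one-dimensional oscillator (frequency $\omega_i$) onto its $n_i^\textnormal{max}+1$ lowest levels. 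Since $x_i$ and $p_i$ act as the identity on the two factors $j\neq i$, the commutators factorize as well: $[x_1,\omega_N] = [x_1,P_1]\otimes P_2\otimes P_3$ and cyclically, with the analogous identity for $p_i$. Using that the trace norm is multiplicative on tensor products and that $\norm{P_j}_\textnormal{tr} = n_j^\textnormal{max}+1$, this gives $\norm{[x_i,\omega_N]}_\textnormal{tr} = \norm{[x_i,P_i]}_\textnormal{tr}\prod_{j\neq i}(n_j^\textnormal{max}+1)$, and likewise for $p_i$. By \cref{eq:nmax} the factor $\prod_{j\neq i}(n_j^\textnormal{max}+1)$ is of order $N^{2/3}$, so it suffices to prove the one-dimensional bounds $\norm{[x_i,P_i]}_\textnormal{tr}\leq C$ and $\norm{[p_i,P_i]}_\textnormal{tr}\leq C$ uniformly in $N$; then $N^{2/3} = N\hbar$ closes the argument.

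For the one-dimensional estimate I would pass to the ladder operators \cref{eq:ca-op}: $x_i$ and $p_i$ are linear combinations of $a_i$ and $a_i^*$ with coefficients of size $\sqrt{\hbar}$ (times harmless $\omega_i^{\pm 1/2}$ factors). The key computational fact is that $[a_i,P_i]$ telescopes to a single boundary term — writing out $a_iP_i$ and $P_ia_i$ with $a_i\lvert n\rangle = \sqrt{n}\,\lvert n-1\rangle$, all the ``bulk'' contributions cancel and one is left with the rank-one operator $[a_i,P_i] = -\sqrt{n_i^\textnormal{max}+1}\,\lvert n_i^\textnormal{max}\rangle\langle n_i^\textnormal{max}+1\rvert$, of trace norm $\sqrt{n_i^\textnormal{max}+1}$; the adjoint $[a_i^*,P_i]$ has the same trace norm. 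Hence $\norm{[x_i,P_i]}_\textnormal{tr}$ and $\norm{[p_i,P_i]}_\textnormal{tr}$ are both bounded by $C\sqrt{\hbar(n_i^\textnormal{max}+1)}$, and with $\hbar = N^{-1/3}$ and $n_i^\textnormal{max}$ of order $N^{1/3}$ from \cref{eq:nmax} this is a constant depending only on the $\omega_i$ and $E$.

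Combining the two steps, $\norm{[x_i,\omega_N]}_\textnormal{tr}\leq C\sqrt{\hbar(n_i^\textnormal{max}+1)}\prod_{j\neq i}(n_j^\textnormal{max}+1)\leq C N^{2/3} = CN\hbar$, and identically for $p_i$, which is \cref{eq:xcomm}. I do not expect a genuine obstacle here: the points that need care are (i) justifying the tensor factorization of $\omega_N$ and of the commutators, together with multiplicativity of $\norm{\cdot}_\textnormal{tr}$ on tensor products of trace-class operators; (ii) the exact telescoping cancellation reducing $[a_i,P_i]$ to its boundary term; and (iii) the bookkeeping of powers of $\hbar$ and $N$, where the cutoffs \cref{eq:nmax} are tuned precisely so that everything balances to $N\hbar$ — in particular noting that rounding the $n_i^\textnormal{max}$ to integers only affects constants. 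The ``hardest'' part is thus just being careful with the scaling arithmetic.
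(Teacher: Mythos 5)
Your proposal is correct and follows essentially the same route as the paper: the heart of both arguments is the telescoping of the commutator with the ladder operators to a boundary term of trace norm $\sqrt{n_i^\textnormal{max}+1}$, followed by the scaling arithmetic $\sqrt{\hbar\,n_i^\textnormal{max}}\cdot N^{2/3}\sim N\hbar$; your packaging via $\omega_N=P_1\otimes P_2\otimes P_3$ and multiplicativity of the trace norm is just a cleaner bookkeeping of the same computation. The only minor deviation is that you treat $p_i$ by repeating the ladder-operator calculation rather than by the Fourier-transform symmetry of the Hermite functions, which the paper itself mentions as an equivalent alternative.
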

\begin{proof} We prove the first bound, without loss of generality, for $i=1$. Relation \cref{eq:ca-op} is easily inverted to obtain $x_1 = \sqrt{\hbar/(2 w_1)} \left(a_1 + a_1^* \right)$.
We compute the commutator
\begin{align*}
 [x_1,\omega_N] & = \sqrt{\frac{\hbar}{2 w_1}}\sum_{n_1=0}^{n_1^\textnormal{max}} \Big[a_1^* + a_1,\lvert n_1\rangle\langle n_1\rvert\Big]\otimes \sum_{n_2=0}^{n_2^\textnormal{max}}  \lvert n_2\rangle\langle n_2\rvert \otimes \sum_{n_3=0}^{n_3^\textnormal{max}} \lvert n_3\rangle \langle n_3\rvert\;.
 \end{align*}
By the usual creation operator rules
\begin{align*}& a_1^* \lvert n_1\rangle\langle n_1\rvert - \lvert n_1\rangle\langle n_1\rvert a_1^* + a_1 \lvert n_1\rangle \langle n_1\rvert - \lvert n_1\rangle\langle n_1\rvert a_1\\
 & = \sqrt{n+1} \lvert n_1+1\rangle\langle n_1\rvert - \lvert n_1 \rangle \langle n_1 - 1\rvert \sqrt{n_1} + \lvert n_1-1\rangle\langle n_1\rvert \sqrt{n_1} - \lvert n_1 \rangle \langle n_1 + 1\rvert \sqrt{n_1+1}\,.
\end{align*}
Paying attention to the summation indices (recall that $a_1\lvert 0\rangle = 0$) we find
\begin{align*}
 &\sum_{n_1=0}^{n_1^\textnormal{max}} \sqrt{n_1 + 1}\lvert n_1+1\rangle\langle n_1\rvert - \lvert n_1\rangle\langle n_1+1\rvert \sqrt{n_1 + 1} \\
 & \quad+ \sum_{n_1=1}^{n_1^\textnormal{max}} \lvert n_1-1\rangle \langle n_1\rvert \sqrt{n_1} - \lvert n_1 \rangle \langle n_1 - 1\rvert \sqrt{n_1}\\
 & = \sqrt{n_1^\textnormal{max} + 1} \Big( \lvert n_1^\textnormal{max} + 1 \rangle \langle n_1^\textnormal{max}\rvert - \lvert n_1^\textnormal{max}\rangle \langle n_1^\textnormal{max} + 1\rvert \Big)\,.
\end{align*}
Squaring yields
\begin{align*}
 & \lvert [x_1,\omega_N] \rvert^2 \\
 & =  (n_1^\textnormal{max}+1) \Big( \lvert n_1^\textnormal{max} \rangle \langle n_1^\textnormal{max} +1 \rvert - \lvert n_1^\textnormal{max} +1\rangle \langle n_1^\textnormal{max}\rvert \Big) \Big( \lvert n_1^\textnormal{max} + 1 \rangle \langle n_1^\textnormal{max}\rvert - \lvert n_1^\textnormal{max}\rangle \langle n_1^\textnormal{max} + 1\rvert \Big)\\
 & \quad \otimes \frac{\hbar}{2 w_1}  
 \sum_{n_2=0}^{n_2^\textnormal{max}}  \sum_{\tilde{n}_2=0}^{n_2^\textnormal{max}}  \lvert n_2 \rangle \underbrace{\langle n_2 | \tilde{n}_2\rangle}_{\delta_{n_2,\tilde{n}_2}} \langle \tilde{n}_2\rvert \otimes \sum_{n_3=0}^{n_3^\textnormal{max}} \sum_{\tilde{n}_3=0}^{n_3^\textnormal{max}} \lvert n_3 \rangle \underbrace{\langle n_3 | \tilde{n}_3\rangle}_{\delta_{n_3,\tilde{n}_3}} \langle \tilde{n}_3\rvert 
 \;.
\end{align*}
The square root is easy to calculate since the second and third component of the tensor product are already diagonal and the first one also becomes diagonal when we evaluate the scalar products, leading to
\begin{align*}
 \sqrt{\lvert [x_1,\omega_N] \rvert^2 } & = \sqrt{\frac{\hbar}{2 w_1}} \sqrt{n_1^\textnormal{max} + 1} \Big( \lvert n_1^\textnormal{max} \rangle \langle n_1^\textnormal{max} \rvert + \lvert n_1^\textnormal{max} + 1 \rangle \langle n_1^\textnormal{max} +1 \rvert \Big) \\
 & \quad \otimes \sum_{n_2 = 0}^{n_2^\textnormal{max}}  \lvert n_2 \rangle \langle n_2\rvert \otimes \sum_{n_3=0}^{n_3^\textnormal{max}} \lvert n_3\rangle\langle n_3\rvert\;.
\end{align*}
Finally taking the trace we obtain the claimed bound
\[
 \lVert [x_1,\omega_N] \rVert_\textnormal{tr} = \sqrt{\frac{\hbar}{2 w_1}} \sqrt{n_1^\textnormal{max}+1}\,2\,n_2^\textnormal{max} n_3^\textnormal{max}  =
 \textnormal{const} \times N^{2/3} \sqrt{N^{1/3}E+1}\sqrt{\hbar} \leq CN\hbar\;.
\]

The same holds for the momentum operator because the Hermite functions $\lvert n\rangle$ are eigenvectors of the Fourier transform with the eigenvalues being complex phases, which cancel out from the density matrix; this argument uses that the Fourier transform takes the differential operator $p_1$ into the multiplication operator $x_1$ and by unitarity leaves the trace norm invariant. (Alternatively one can do the calculation analogous to the above also for the momentum operator.)
\end{proof}
This shows that the experimentally most important quantum quench can be described by \cref{thm:hf}: non--interacting fermions are cooled to (almost) temperature $T=0$ in a harmonic trap and then the interaction is switched on and the harmonic confinement switched off.

\begin{rem}
 For the mean--field scaling limit to be non--trivial, the volume should be fixed and the density proportional to total particle number $N$. For \cref{eq:gammaN} one easily computes 
 \begin{align*}
  \tr x_1^2 \omega_N & = \frac{\hbar}{2 w_1} \sum_{n_1=0}^{n_1^\textnormal{max}} \sum_{n_2=0}^{n_2^\textnormal{max}} \sum_{n_3=0}^{n_3^\textnormal{max}} \langle n_1,n_2,n_3 \rvert (a_1 + a_1^*)^2 \lvert n_1,n_2,n_3\rangle  \\
    & = \frac{\hbar}{2 w_1} \sum_{n_1=0}^{n_1^\textnormal{max}} \sum_{n_2=0}^{n_2^\textnormal{max}} \sum_{n_3=0}^{n_3^\textnormal{max}} \langle n_1,n_2,n_3 \rvert 2 a_1^* a_1 + 1 \lvert n_1,n_2,n_3\rangle \\
    & = \frac{\hbar}{2 w_1} \sum_{n_1=0}^{n_i^\textnormal{max}} (2 n_1 + 1)(n_2^\textnormal{max} + 1)(n_3^\textnormal{max}+1)  = \frac{\hbar}{2 w_1} (n_1^\textnormal{max} + 1)^2(n_2^\textnormal{max} + 1)(n_3^\textnormal{max}+1)\;.
 \end{align*}
With $N = \tr \omega_N = (n_1^\textnormal{max} + 1)(n_2^\textnormal{max} + 1)(n_3^\textnormal{max}+1)$ we find the spatial extension
\[\sqrt{\langle x_1^2 \rangle} = \sqrt{\frac{\tr x_1^2 \omega_N}{ \tr \omega_N}} = \sqrt{\frac{\hbar}{2 w_1} (n_1^\textnormal{max} + 1)} = \sqrt{ \frac{E}{2 w_1} + \mathcal{O}(N^{-1/3})} = \Ocal(1) \quad \textnormal{as } N \to \infty\;.\]
So we have indeed $N$ particles in a fixed volume, the density thus being of order $N$ as required.
\end{rem}

\section{Quantum Correlations: Random Phase Approximation}
The random phase approximation (RPA) has originally been introduced by \cite{BP53} for computing the ground state energy to the next order of precision beyond the Hartree--Fock variational approximation. The RPA was later shown to correspond to a formal partial resummation of the perturbative expansion in powers of the interaction \cite{GB57}. A further, morally equivalent formulation of the RPA was developed treating pair excitations as approximately bosonic  particles with a diagonalizable Hamiltonian. This latter ``bosonization'' approach is the only that has found a rigorous justification so far, namely for the ground state energy in \cite{HPR20,CHN21,BNP+20,BNP+21b,BPSS21}. In the following I discuss a recent result showing that the bosonization formulation of the RPA also has a dynamical counterpart, which is valid as a refinement of Hartree--Fock theory to describe the evolution of collective pair excitations over the Fermi ball \cref{eq:BF} of plane waves. The discussion in this section therefore applies only to the case of fermions on the torus $\Tbb^3$. (This is in contrast to the previous sections where particles in $\Rbb^3$ were considered. The restriction to the torus is necessary so that the plane waves are normalizable, and thus can be used as a stationary state of Hartree--Fock theory to which we add the bosonic excitations whose many--body evolution we analyze.)
 
\paragraph{Fock Space Representation}
 To explain the approximate collective bosonization approach developed in \cite{BNP+20}, the method of second quantization is necessary. In second quantization the $N$--particle space $L^2_\textnormal{a}(\Tbb^{3N})$ is embedded in the fermionic Fock space, i.\,e., the direct sum over all possible particle numbers $n$,
 \[\fock := \Cbb \oplus \bigoplus_{n=1}^\infty L^2_\textnormal{a}(\Tbb^{3n})\;.\]
 More explicitly, a vector $\psi_N \in L^2_\textnormal{a}(\Tbb^{3N})$ is identified with a sequence $(0,0,\ldots, 0, \psi_N,0,\ldots) \in \fock$. The advantage of Fock space is that one can use creation and annihilation operators. These are operators on Fock space satisfying the canonical \emph{anticommutator} relations (CAR)
 \[\{a_p,a^*_q\} := a_p a^*_q + a^*_q a_p = \delta_{p,q}\;, \quad \{a_p,a_q\}= 0 = \{a^*_q,a^*_p\} \;, \quad \textnormal{for all momenta } p,q \in \Zbb^3 \,.\]
 We skip the well--known definition of these operators (see, e.\,g., \cite{Sol14}); the convenience of these operators lies exactly in the fact that we only need to know their anticommutators, the fact that applying arbitrary numbers of creation operators $a^*$ to the vacuum vector $\Omega = (1,0,0,\ldots) \in \fock$ one obtains a basis of Fock space $\fock$, and the fact that $\Omega$ lies in the null space of all annihilation operators $a$. The starting point for all further steps is that the Hamiltonian $H$ is just the restriction to the $N$--particle sector of the Fock space Hamiltonian
 \[\Hcal := \hbar^2 \sum_{k \in \Zbb^3} \lvert p\rvert^2 a^*_p a_p + \frac{1}{2N}\sum_{k,q,p \in \Zbb^3} \hat{V}(k) a^*_{p+k} a^*_{q-k} a_q a_p \;.\]
 
 \paragraph{Particle--Hole Transformation}
 In the first step, a particle--hole transformation is used to separate the fixed Fermi ball of plane waves from its excitations. The particle--hole transformation is a unitary map $R: \fock \to \fock$, defined by its properties
 \[R^* a^*_p R := \begin{cases} a_p & \textnormal{for }p\in B_\F \\ a^*_p & \textnormal{for }p \in B_\F^c := \Zbb^3 \setminus B_\F \end{cases}\;, \qquad R\Omega := \psi_{B_\F}\;,\]
the latter vector being the Slater determinant constructed from the plane waves in $B_\F$, as in \cref{eq:fermiball}. Using this rule for conjugation with $R$ and the CAR to arrange the result in normal--order (creation operators $a^*$ to the left of annihilation operators $a$) one obtains
\[R^* \Hcal R = E_\textnormal{HF} + \Hbb_0 + Q_\B + \Ecal\;.\]
The first summand $E_\textnormal{HF} = \langle \psi_{B_\F}, H \psi_{B_\F} \rangle$ is a real number and can be identified as the Hartree--Fock energy. The term
\begin{align*}
 \Hbb_0 & := \sum_{k \in \Zbb^3} e(k) a^*_k a_k\;, \quad e(k) := \hbar^2 \left\lvert  \lvert k\rvert^2 - k_\F^2 \right\rvert\;,        \tagg{disprel}
 \intertext{ is the kinetic energy of pair excitations (removing one momentum mode from inside the Fermi ball by applying an annihilation operator and adding a particle outside the Fermi ball by applying a creation operator). The term}
 Q_\B &:= \frac{1}{N}\sum_{k \in \north} \hat{V}(k) \Big( b^*(k) b(k) + b^*(-k) b(-k) + b^*(k) b^*(-k) +  b(-k) b(k) \Big)        \tagg{QB}
\end{align*} 
is the part of the interaction that can be written purely in terms of \emph{particle--hole excitations} ``delocalized'' over the entire Fermi ball, i.\,e., the linear combinations
\begin{equation} b^*(k) := \sum_{p \in B_\F^c, h\in B_\F} \delta_{p-h,k} a^*_p a^*_h\;.\label{eq:bstark}\end{equation}
The purpose of introducing a separation of the support of $\hat{V}$ into two parts, $\supp \hat{V} = \north \cup (-\north)$, defined by
\[ \north := \{  k \in \Zbb^3 \cap \supp\hat{V} : k_3 > 0 \textnormal{ or ($k_3 =0$ and $k_2 > 0$) or ($k_2 = k_3 =0$ and $k_1 >0$)} \} \;,\]
is that the pair creation operators appear only once in the summand, not both for $k$ and $-k$ (which will permit us to approximate them as independent bosonic modes later). 

All further contributions to the Hamiltonian, i.\,e., everything that is not part of $\Hbb_0$ or cannot be written using the $b^*$-- and $b$--operators, are collected in $\Ecal$ and can be proven to constitute only small error terms, at least when acting on states with few excitations.

\medskip 

At this point the main observation is that $Q_\B$ is quadratic when expressed through the $b^*$-- and $b$--operators; moreover, being (sums of) pairs of anticommuting operators, the $b^*$ among them commute, i.\,e.,
\[ [b^*(k),b^*(l)] := b^*(k) b^*(l) - b^*(l) b^*(k) = 0\;, \]
i.\,e., these operators commute just like bosonic operators. Moreover, the vacuum $\Omega$ is in the null space of all the $b$--operators, just like a vacuum vector in Fock space. One may therefore conjecture that the $b$--operators realize a representation of the canonical \emph{commutator} relations (CCR), which describe bosonic particles in a symmetric Fock space. Recall that true bosonic operators $b_k$ and $b^*_l$ would satisfy the exact CCR
\begin{equation} [b^*_k, b^*_l] = 0 = [b_k,b_l]\;, \qquad [b_k,b^*_l] = \delta_{k,l}\;. \label{eq:exactccr}\end{equation}
That this cannot be quite true is easily noted: whereas by antisymmetry one can never create more than two fermions in the same state (one has $(a^*_k)^2 = 0$, the Pauli exclusion principle), for bosons arbitrary powers of $b^*_k$ never vanish. Since the concrete $b^*(k)$ of \cref{eq:bstark} are constructed from fermionic operators, they will at high powers eventually vanish and thus violate this bosonic property. But as long as we consider states with few excitations, \cref{eq:exactccr} may constitute a valid approximation for the commutator relations of the constructed operators. This will in fact be quantified by \cref{eq:approxccr} below.

For the moment, let us focus on another difficulty: the operator $\Hbb_0$ is not given in terms of $b^*$-- and $b$--operators. To obtain an exactly solvable quantum theory, we need to express not only $Q_\B$ but also $\Hbb_0$ as a quadratic expression in terms of approximately bosonic operators. This will be achieved by the patch decomposition we introduce next. 

\paragraph{Patch Decomposition of the Fermi Surface}
It turns out that a formula for $\Hbb_0$ that is quadratic in the $b^*$-- and $b$--operators can be obtained if the dispersion relation $e(k)$ (as defined in \cref{eq:disprel}) is linearized. To linearize $e(k)$, we argue that all momenta $p \in \BFc$ and $h \in \BF$ belong to a shell around the Fermi surface $\{k \in \Rbb^3: \lvert k\rvert = k_\F\}$. We can then cut this shell into patches and linearize $e(k)$ around the patch centers. So why are all momenta restricted to such a shell? Note that the main term $Q_\B$ of the interaction contains only pair operators in which $k \in \supp \hat{V}$. So assuming $\supp\hat{V}$ to be compact, the pair operators \cref{eq:bstark} because of the requirement $p-h=k$ indeed contain only $p$ and $h$ belonging to a shell of width $\operatorname{diam}(\supp\hat{V})$ around the Fermi surface. The ``northern'' half of this shell (with ``north'' fixed as an arbitrary direction) may then be sliced into patches $B_\alpha$ (with indices $\alpha = 1,\ldots,M/2$) as indicated in \cref{fig:patches}, and this slicing reflected by the origin to the southern half. The total number of patches will be chosen as $M := N^\alpha$ with $\alpha \in (0,2/3)$ (further requirements of the proof narrow down this interval). 
These patches are separated by corridors of width strictly larger than $2\operatorname{diam}\supp\hat{V}$; moreover
they do not degenerate as $N \to \infty$ in the sense that their circumference will always be of order $N^{1/3}/\sqrt{M}$ while they cover an area of size $N^{2/3}/M$ on the Fermi sphere. For every patch $B_\alpha$ we choose a vector $\omega_\alpha$ with $\lvert \omega_\alpha \rvert = k_\F$ near the patch center. 

The main idea is now to localize the pair creation operators to these patches, defining
\begin{equation} b^*_\alpha(k) := \frac{1}{n_\alpha(k)}\sum_{\substack{p \in B_\F^c \cap B_\alpha \\ h\in B_\F \cap B_\alpha}} \delta_{p-h,k} a^*_p a^*_h\;,\label{eq:bstarkalpha}\end{equation}
where we introduced the normalization constant $n_\alpha(k)$ such that $\norm{b^*_\alpha(k) \Omega} = 1$. Here we notice a small problem: only if $k$ points outward the Fermi ball (from a hole momentum $h\in B_\F$ to a particle momentum $p \in B_\F^c$) this sum will be non--zero. If $k$ points radially inward or outward but under a very flat angle to the tangent plane, the sum may be empty or contain very few $(h,p)$--pairs. We therefore impose a cut--off on the set of $\alpha$ such that we keep only those with
\begin{equation}
\label{eq:delta}
k \cdot \frac{\omega_\alpha}{\lvert \omega_\alpha \rvert} \geq N^{-\delta}
\end{equation}
(the choice of $\delta > 0$ may be optimized).
One finds
\[n_\alpha(k) = \bigg[\sum_{\substack{p \in B_\F^c \cap B_\alpha \\ h\in B_\F \cap B_\alpha}} \delta_{p-h,k}\bigg]^{1/2} \approx \sqrt{\frac{4\pi k_\F^2}{M} \left\lvert k\cdot \frac{\omega_\alpha}{\lvert \omega_\alpha \rvert} \right\rvert} \;.\]
 We can now prove that these operators are almost bosonic, in the sense that    
\begin{equation}
 [b^*_\alpha(k),b^*_\beta(l)] = [b_\alpha(k), b_\beta(l)] = 0\;, \quad [b_\alpha(k),b^*_\beta(l)] = \delta_{\alpha,\beta} \left( \delta_{k,l} + \mathcal{\Ecal_\alpha}(k,l) \right) \label{eq:approxccr}
\end{equation}
where the error term of the last commutator can be estimated, e.\,g., by bounds such as $\norm{\Ecal_\alpha(k,l) \psi} \leq 2{n_\alpha(k)^{-1} n_\alpha(l)^{-1}} \norm{\Ncal \psi}$ for all $\psi\in \fock$. Thanks to the introduction of the cut--off and the assumption $M \ll N^{2/3}$ we have $n_\alpha(k) \to \infty$ as $N\to \infty$.

As we have seen, for at least half of the values of $\alpha$, the operators $b^*_\alpha(k)$ vanish. To simplify notation we introduce
\[c^*_\alpha(k) := \begin{cases} b^*_\alpha(k) & \textnormal{ for } \alpha \textnormal{ such that } k \cdot \omega_\alpha / \lvert \omega_\alpha \rvert \geq N^{-\delta}\;, \\
 b^*_\alpha(-k) & \textnormal{ for } \alpha \textnormal{ such that } k \cdot \omega_\alpha / \lvert \omega_\alpha \rvert \leq - N^{-\delta}\;.
\end{cases}\]
(In the following we will use $\Ical_k^{+} := \{\alpha : k \cdot \omega_\alpha / \lvert \omega_\alpha \rvert \geq N^{-\delta} \} \subset \{1,2,\ldots,M\} $ and $\Ical_k := \Ical_k^{+} \cup \Ical_{-k}^{+}$, implicitly depending on the choice of $\delta$.)

Now turning back to the kinetic energy, one may linearize the dispersion relation as claimed around the patch centers $\omega_\alpha$: in fact (without loss of generality for $\alpha \in \Ikp$)
\begin{align*}
 [\Hbb_0, c^*_\alpha(k)] & = \Big[\sum_{i \in \Zbb^3} e(i) a^*_i a_i, \frac{1}{n_\alpha(k)} \sum_{\substack{p \in B_\F^c \cap B_\alpha \\ h\in B_\F \cap B_\alpha}} \delta_{p-h,k} a^*_p a^*_h\Big] \\
 & = \frac{1}{n_\alpha(k)} \sum_{\substack{p \in B_\F^c \cap B_\alpha \\ h\in B_\F \cap B_\alpha}} \delta_{p-h,k} \left( e(p) - e(h) \right) a^*_p a^*_h 
 \approx 2 \hbar^2 k_\F k \cdot \frac{\omega_\alpha}{ \lvert \omega_\alpha \rvert } c^*_\alpha(k) \;.
\end{align*}
The same commutator is obtained replacing $\Hbb_0$ in this formula by
\[\Dbb_\B := 2\hbar^2 k_\F \sum_{k \in \north} \sum_{\alpha \in \Ik} \left\lvert k \cdot \frac{\omega_\alpha}{\lvert \omega_\alpha \rvert} \right\rvert c^*_\alpha(k) c_\alpha(k)\;.\] 
If vectors of the form $\prod_{j=1}^m c^*_{\alpha_j}(k_j) \Omega$, $m \in \Nbb$, constituted a basis of the \emph{fermionic} Fock space, this would imply an identity between the operators $\Hbb_0$ and $\Dbb_\B$. In \cite{BNP+21b,BPSS21} much effort is dedicated to justifying this at least as an approximation of vectors close to the ground state. As far as the approximation of the time evolution presented below is concerned, this will be much less of a problem since we only consider initial data created by the application of the pair creation operators $c^*_\alpha(k)$.
\begin{figure}[t]  
 \centering    
 \includegraphics[trim={5cm 20cm 12cm 3.25cm},clip,scale=1.3]{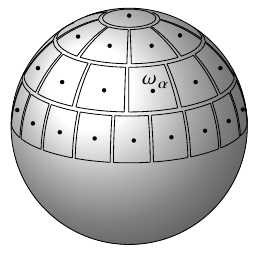}
\caption{The shell around the Fermi surface is decomposed into $M = N^{\alpha}$ patches, with $\alpha > 0$ to be optimized. Patches on the southern half are obtained through reflection at the origin. {\footnotesize (From \cite{BNP+20} under CC BY 4.0 license, \url{http://creativecommons.org/licenses/by/4.0/}, with $\omega_\alpha$ added.)}}  
\label{fig:patches} 
\end{figure}      

\paragraph{Approximately Bosonic Effective Hamiltonian}
We may now combine what we learned about the dominant interaction term $Q_\B$ and the kinetic energy to state the bosonic theory providing us with the effective evolution of particle--hole pair excitations. Summing the approximate kinetic energy $\Dbb_\B$ and the dominant interaction terms $Q_\B$, and decomposing 
\[b^*(k) \approx \sum_{\alpha \in \Ikp} n_\alpha(k) c^*_\alpha(k) \;,\]
we find the approximation (with $\kappa := (3/4\pi)^{1/3}$)
\begin{equation}\label{eq:effbosH}
\Hcal_\textnormal{corr}= R^* \Hcal_N R - E^{\textnormal{pw}}_{N} \approx \sum_{k\in \north} 2\hbar \kappaf \lvert k\rvert h_{\textnormal{eff}}(k)
\end{equation}
with the effective Hamiltonian
\begin{align}\label{eq:heff}
h_\textnormal{eff}(k)  := \!\sum_{\alpha,\beta \in \Ik} \Big[ \big( \D(k) + \W(k) \big)_{\alpha,\beta} c_\alpha^*(k) c_\beta(k) +  \frac{1}{2} \Wt(k)_{\alpha,\beta} \big( c^*_\alpha(k) c^*_\beta(k) + \hc \big)  \Big]
\end{align}
where $\D(k)$, $\W(k)$, and $\Wt(k)$ are  real symmetric matrices
\begin{equation}
\label{eq:blocks1}\begin{split}
\D (k)_{\alpha,\beta} &: =  \delta_{\alpha,\beta} \lvert k \cdot {\omega}_\alpha\rvert / ( \lvert k\rvert \lvert \omega_\alpha\rvert )\;, \qquad \forall \alpha,\beta \in \Ik\;, \\
\W(k)_{\alpha,\beta}  & := \frac{\hat V(k)}{2\hbar \kappaf N \lvert k\rvert} \times \left\{ \begin{array}{cl} n_\alpha(k) n_\beta(k) & \text{ if } \alpha,\beta \in \Ikp \text{ or } \alpha,\beta \in \Ical_{-k}^{+} \\
 0 & \text{ otherwise}\,,\end{array} \right. \\
\Wt(k)_{\alpha,\beta} & := \frac{\hat V(k)}{2\hbar \kappaf N \lvert k\rvert} \times \left\{ \begin{array}{cl} 
0  & \text{ if } \alpha,\beta \in \Ikp \text{ or } \alpha,\beta \in \Ical_{-k}^{+} 
\\ n_\alpha(k) n_\beta(k)  & \text{otherwise}\,.
\end{array} \right.
\end{split}
\end{equation}
The rigorous justification of this (approximately) bosonic Hamiltonian as an approximation to the microscopic fermionic Schr\"odinger equation is provided by \cref{thm:mainRPA} below. To state it we need to discuss the solution (i.\,e., Fock space diagonalization) of the effective Hamiltonian first.

\paragraph{Diagonalization}
If $c_\alpha^*(k)$ were exactly bosonic creation operators, then the quadratic Hamiltonian $h_\textnormal{eff}(k)$ could be diagonalized by a Bogoliubov transformation (a linear automorphism of the CCR algebra) of the form \cite[Appendix A.1]{BNP+20}
\begin{equation}    \label{eq:B}
T:= e^{B}\;, \quad B: = \sum_{k\in \north}   \frac{1}{2} \sum_{\alpha, \beta \in \mathcal{I}_{k}} K(k)_{\alpha, \beta} c^{*}_{\alpha}(k) c^{*}_{\beta}(k) - \hc
\end{equation}
where
\begin{align*} 
K(k) & := \log \lvert S (k)^\intercal \rvert =\frac{1}{2} \log  \Big( S(k) S(k)^\intercal \Big) \;,\\
S(k) & := (\D(k)+\W(k)-\Wt(k))^{\! 1/2} E(k)^{-1/2}\;,  \\
E(k) & := \!\!\Big[\! \big(\D(k)+\W(k)-\Wt(k)\big)^{1/2}\! (\D(k)+\W(k)+\Wt(k)) \big(\D(k)+\W(k)-\Wt(k)\big)^{\! 1/2} \Big]^{\! 1/2}\!\!. 
\end{align*}
Since our pair operators do not quite satisfy the commutator relations of the CCR algebra, $T$ turns out to be only approximately a Bogoliubov transformation:
\[
T^* c_{\gamma}(k) T \approx \sum_{\alpha\in \mathcal{I}_{k}} \cosh(   K(k) )_{\alpha, \gamma} c_{\alpha}(k) + \sum_{\alpha \in \mathcal{I}_{k}} \sinh (  K(k) )_{\alpha, \gamma} c^{*}_{\alpha}(k) \;.
\]
With the indicated choice of $K(k)$, the ``off--diagonal'' terms in the quadratic Hamiltonian (i.\,e., those of the form $c^* c^*$ and $c c$) are approximately cancelled by conjugation with the unitary $T$ (see the proof of \cite[Lemma~10.1]{BNP+21b}), so that
\begin{equation} \label{eq:corr-diag}
T^{*} \Hcal_\textnormal{corr}  T  \approx  \widetilde E_N^{\textnormal{RPA}} + \sum_{k\in \north} 2\hbar \kappaf \lvert k\rvert \sum_{\alpha, \beta \in \mathcal{I}_{k}} \mathfrak{K}(k)_{\alpha, \beta} c^{*}_{\alpha}(k) c_{\beta}(k)
\end{equation} 
with the Hermitian matrix
\begin{align}
\mathfrak{K}(k) &= \cosh(K(k)) (D(k) + W(k)) \cosh(K(k)) + \sinh(K(k)) (D(k) + W(k)) \sinh(K(k)) \nonumber\\
& \quad + \cosh(K(k)) \widetilde{W}(k) \sinh(K(k)) + \sinh(K(k)) \widetilde{W}(k) \cosh(K(k)) \label{eq:curlyKexplicit}
\end{align}
and the RPA prediction for the ground state energy correction
\begin{align}  \label{eq:GSE-RPA}
\widetilde E_N^{\textnormal{RPA}}=  \sum_{k\in \north} \hbar \kappaf \lvert k\rvert \tr (E(k) - \D(k)-\W(k)) \in \Rbb\;. 
\end{align}
Thus $h_\textnormal{eff}(k)$ can be understood as the approximately bosonic second quantization of the operator $\mathfrak{K}(k)$ on the one--boson space $\ell^2(\Ik) \simeq \Cbb^{\lvert \Ik\rvert}$. If the effective Hamiltonians at different momenta $k$ were independent, we could simply sum over $k\in \north$ and find that the excitation spectrum consists of sums of eigenvalues of $2\hbar \kappaf \lvert k\rvert E(k)$ (see \cite{Ben21}). 

\paragraph{Particle--Hole Pairs: Initial Data and Bosonic Dynamics}
The theorem will describe the evolution of collective particle--hole excitations of the Fermi ball. We consider the many--body Schr\"odinger equation with the initial data
\begin{equation} \label{eq:xi-def-ini}
\psi := R T \xi  \in L^2_a(\Rbb^{3N})\;,\qquad  \xi := \frac{1}{Z_{m}} c^{*}(\varphi_{1}) \cdots c^{*}(\varphi_{m}) \Omega\;,
\end{equation}
where
\begin{equation}
c^{*}(\varphi_{i}) = \sum_{k \in \north} \sum_{\alpha \in \mathcal{I}_{k}} c^{*}_{\alpha}(k) (\varphi_{i}(k))_{\alpha}
\end{equation}
with a number $m\in \Nbb$ of one--boson states 
\begin{equation}\label{eq:139}
\varphi_{1}, \ldots, \varphi_{m} \in \bigoplus_{k\in \north} \ell^2(\mathcal{I}_k) \;, \quad 
\norm{ \varphi_{i} }^2 := \sum_{k\in \north} \sum_{\alpha \in \mathcal{I}_{k}} \lvert (\varphi_{i}(k))_{\alpha}\rvert^2 = 1\;.
\end{equation}
We do not require orthogonality of the functions $\varphi_i$: since they describe approximately bosonic excitations, they may all occupy the same one--boson state $\varphi_1$.
The normalization constant $Z_{m}$ is chosen such that $\| \xi \| = 1$.  
We will approximate the evolution of such initial data using the effective evolution
\begin{equation} \label{eq:def-xi-t}
\xi(t) := \frac{1}{Z_{m}} c^{*}(\varphi_{1}(t)) \cdots c^{*}(\varphi_{m}(t)) \Omega\;,\qquad t\in \Rbb\;,
\end{equation}
where, with $\mathfrak{K}(k)$ defined in \eqref{eq:corr-diag}, 
\begin{equation}\label{eq:evphi}
\varphi_{m}(t) := e^{-i H_{\textnormal{B}} t/\hbar} \varphi_{m}\;,\qquad H_{\textnormal{B}} := \bigoplus_{k\in \north} 2 \hbar \kappaf |k| \mathfrak{K}(k)\;.
\end{equation}
The state $\xi(t)$ can be viewed as an approximate $m$--boson state, where every $\varphi_{i}$ evolves independently according to the one--boson Hamiltonian $H_{\textnormal{B}}$. We can now state the theorem.

\begin{thm}[RPA Dynamics, \cite{BNP+22}]  
\label{thm:mainRPA}
Assume that $\hat{V}: \Zbb^3 \to \Rbb$ is compactly supported, non--negative, and $\hat{V}(k)=\hat{V}(-k)$ for all $k\in \Zbb^3$. Let $k_\F >0$, $N := \lvert\{ k\in \Zbb^3: \lvert k\rvert \leq k_\F \}\rvert$, and $\hbar := \kappaf k_\F^{-1} = N^{-1/3} + \Ocal(N^{-2/3})$ with $\kappaf = \left(3/4\pi\right)^{\frac{1}{3}}$. Moreover assume that the number $m$ of pair excitations satisfies $m^3 (2m-1)!! \ll N^\delta$, where $\delta$ is given by \cref{eq:delta}. Then there exists a $C_{V} > 0$ such that for any $t \in \Rbb$ we have
\begin{align}  \label{eq:main-1-weaker-bb}
& \norm{ e^{-i\mathcal{H} t/\hbar} R T \xi - e^{-i (E^{\textnormal{pw}}_{N} + \widetilde{E}_N^\textnormal{RPA})t/\hbar} R T \xi(t) } \\
& \leq C_V  (m+1)^2\sqrt{(2m-1)!!} \Big( N^{-\frac{\delta}{2}}  + M^{-\frac{1}{2}} + M^{\frac{3}{2}}N^{-\frac{1}{3} + \delta} + M^{\frac{1}{4}} N^{-\frac{1}{6}}  \Big)|t| \;. 
\end{align}
\end{thm}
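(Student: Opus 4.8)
Since $RT$ is unitary and $E^{\textnormal{pw}}_{N}+\widetilde{E}_N^\textnormal{RPA}\in\Rbb$ is a scalar, the quantity on the left of \cref{eq:main-1-weaker-bb} equals
\begin{equation*}
\Bignorm{e^{-i\mathcal{G}t/\hbar}\xi - e^{-i(E^{\textnormal{pw}}_{N}+\widetilde{E}_N^\textnormal{RPA})t/\hbar}\,\xi(t)}\,,\qquad \mathcal{G}:=(RT)^*\,\Hcal\,(RT)\,,
\end{equation*}
so the whole problem is transported to the (approximately bosonic) excitation Fock space and becomes a comparison between the flow generated by $\mathcal{G}$ and the explicit family $\xi(t)$ of \cref{eq:def-xi-t}. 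The first task is an operator identity. Chaining together the particle--hole conjugation $R^*\Hcal R = E^{\textnormal{pw}}_{N}+\Hbb_0+Q_\B+\Ecal_1$, the replacement of the kinetic term $\Hbb_0$ by its patchwise linearization $\Dbb_\B$ (with an error $\Ecal_2$ measuring $e(p)-e(h)-2\hbar^2 k_\F\,k\cdot\omega_\alpha/\lvert\omega_\alpha\rvert$ over each patch $B_\alpha$), and the conjugation of $Q_\B+\Dbb_\B$ by the near--Bogoliubov map $T$ of \cref{eq:B} --- which, for the stated choice of $K(k)$, cancels the $c^*c^*$-- and $cc$--terms up to an error $\Ecal_3$ produced by the non--exact commutators \cref{eq:approxccr} --- one arrives at
\begin{equation*}
\mathcal{G} = E^{\textnormal{pw}}_{N}+\widetilde{E}_N^\textnormal{RPA}+\Dbb_\B^{\textnormal{diag}}+\Ecal\,,\qquad \Dbb_\B^{\textnormal{diag}}:=\sum_{k\in\north} 2\hbar\kappaf\lvert k\rvert\!\!\sum_{\alpha,\beta\in\Ik}\!\!\mathfrak{K}(k)_{\alpha,\beta}\,c^*_\alpha(k)c_\beta(k)\,,
\end{equation*}
with $\Ecal:=\Ecal_1+\Ecal_2+\Ecal_3$ an error operator controllable, on states with few excitations, by powers of the excitation number operator $\Ncal$ and of $\Hbb_0$. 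Here the hypotheses $\hat{V}\geq0$ and $\hat{V}(k)=\hat{V}(-k)$ enter: they force $\D(k)+\W(k)-\Wt(k)>0$, so that $S(k),E(k),K(k)$ in \cref{eq:B} are well defined with operator norms bounded uniformly in $k$ and $N$, which is what makes the diagonalization and the estimate on $\Ecal_3$ possible.

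The second ingredient is a collection of a priori bounds, uniform in time, on the excitation content of the comparison state. Since the one--boson generator $H_\B=\bigoplus_{k\in\north}2\hbar\kappaf\lvert k\rvert\mathfrak{K}(k)$ is self--adjoint, each $\varphi_i(t)=e^{-iH_\B t/\hbar}\varphi_i$ keeps unit norm, so $\xi(t)$ has exactly the same algebraic form as $\xi$; normal--ordering the $m$--fold product $c^*(\varphi_1(t))\cdots c^*(\varphi_m(t))\Omega$ by repeated use of \cref{eq:approxccr}, together with the explicit $Z_m$, then yields $\Bignorm{\Ncal^{\,j}\xi(t)}\leq C_{j,m}\sqrt{(2m-1)!!}$ and analogous control of $\Hbb_0$ and of the fermionic number operator on $RT\xi(t)$, uniformly for $t\in\Rbb$, with the $m$--dependence that ultimately produces the prefactor $(m+1)^2\sqrt{(2m-1)!!}$. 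The hypothesis $m^3(2m-1)!!\ll N^\delta$ is precisely the condition under which these excitation numbers stay negligible against the normalizations $n_\alpha(k)\sim\sqrt{k_\F^2/M}$ appearing in \cref{eq:approxccr}.

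The comparison is then closed by a single Duhamel step. Differentiating $t\mapsto e^{i\mathcal{G}t/\hbar}e^{-i(E^{\textnormal{pw}}_{N}+\widetilde{E}_N^\textnormal{RPA})t/\hbar}\xi(t)$ and using that, if the $c^*_\alpha(k)$ were exact bosonic operators, $\Dbb_\B^{\textnormal{diag}}$ would be the second quantization of $H_\B$ and hence $\dot\xi(t)=-\tfrac{i}{\hbar}\Dbb_\B^{\textnormal{diag}}\xi(t)$, one obtains
\begin{equation*}
\Bignorm{e^{-i\mathcal{G}t/\hbar}\xi - e^{-i(E^{\textnormal{pw}}_{N}+\widetilde{E}_N^\textnormal{RPA})t/\hbar}\xi(t)} \ \leq\ \frac{1}{\hbar}\int_0^{\lvert t\rvert}\Big(\Bignorm{\Ecal\,\xi(s)}+\Bignorm{\eta(s)}\Big)\di s\,,
\end{equation*}
where $\eta(s):=\Dbb_\B^{\textnormal{diag}}\xi(s)-\tfrac{1}{Z_m}\sum_i c^*(\varphi_1(s))\cdots c^*(H_\B\varphi_i(s))\cdots c^*(\varphi_m(s))\Omega$ is the residual bosonization error in $\dot\xi(s)$, again bounded via \cref{eq:approxccr} and the moment estimates. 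Inserting the a priori bounds of the previous paragraph together with the term--by--term estimates on $\Ecal_1$ (cubic/quartic fermionic remainders), $\Ecal_2$ (dispersion linearization over patches of diameter $\sim N^{1/3}/\sqrt{M}$) and $\Ecal_3$ (accumulated CCR error propagated through $T$), and dividing by $\hbar=N^{-1/3}+\Ocal(N^{-2/3})$, gives exactly a bound of the form claimed in \cref{eq:main-1-weaker-bb}; choosing the patch number $M=N^\alpha$ (and $\delta>0$) so that each of $N^{-\delta/2}$, $M^{-1/2}$, $M^{3/2}N^{-1/3+\delta}$, $M^{1/4}N^{-1/6}$ tends to $0$ --- possible for suitable $\alpha\in(0,2/3)$, the admissible window being in fact narrower --- completes the proof. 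The principal obstacle is the uniform, quantitatively sharp control of $\Ecal$ on $\xi(s)$: the remainder $\Ecal_1$ mixes operators not reducible to the collective $b$--operators and must be dominated by $\Ncal$ and $\Hbb_0$; $\Ecal_2$ requires exploiting that the curvature correction to $e(\cdot)$ across a single patch is small only because the patches do not degenerate; and $\Ecal_3$ forces one to control the kernel $K(k)$ of $T$ (via the coercivity supplied by $\hat{V}\geq0$) and to count, with the right powers of $M$ and $N$, how the $\delta$--size errors in \cref{eq:approxccr} proliferate under conjugation. The competition between the terms that improve as the patches are refined (finer linearization) and those that deteriorate as the normalizations $n_\alpha(k)$ shrink (loss of the collective--boson structure) is what fixes the admissible range of $\alpha$ and is the quantitative heart of the estimate.
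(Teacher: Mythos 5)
Your strategy is the same one the paper sketches (and that the cited work \cite{BNP+21a} carries out): transport the problem to the excitation space by conjugating with $R$ and the approximate Bogoliubov map $T$ of \cref{eq:B}, identify the generator with $E^{\textnormal{pw}}_N+\widetilde{E}^{\textnormal{RPA}}_N$ plus the diagonalized quadratic operator of \cref{eq:corr-diag}, observe that on exactly bosonic $m$--pair states this quadratic operator generates precisely the evolution \cref{eq:def-xi-t}--\cref{eq:evphi}, and close with a single Duhamel step whose integrand collects the non--bosonizable interaction terms, the linearization of the dispersion relation over patches, and the corrections $\Ecal_\alpha(k,l)$ in \cref{eq:approxccr}. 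So there is no divergence of route to report; the issue is that what you have written is a road map, not a proof, and the places where it is silent are exactly where the work lies.

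Concretely, three gaps. First, your operator identity $\mathcal{G}=E^{\textnormal{pw}}_N+\widetilde{E}^{\textnormal{RPA}}_N+\Dbb_\B^{\textnormal{diag}}+\Ecal$ hides the fact that the remainders $\Ecal_1$ (non--bosonizable interaction terms) and $\Ecal_2$ (kinetic linearization) appear \emph{conjugated by} $T$; since $T$ is not an exact Bogoliubov transformation, bounding $T^*\Ecal_{1,2}T$ on $\xi(s)$ requires a separate propagation lemma showing that $e^{B}$ changes expectations of $\Ncal$ (and of $\Hbb_0$, needed because $\Ecal_1$ is only controlled relative to kinetic energy and number) at most by constants --- a Gr\"onwall--type argument in the parameter of $e^{\lambda B}$ that your sketch never mentions, and which is one of the hardest technical points of the actual proof. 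Second, your a priori bounds are partly misstated: on the fermionic Fock space one has $\Ncal\,\xi(t)=2m\,\xi(t)$ exactly (each $c^*$ creates one particle and one hole), so the content of the moment estimates is not there but in lower bounds on $Z_m$ and in Wick--type pairing counts, which is where $\sqrt{(2m-1)!!}$ and the condition $m^3(2m-1)!!\ll N^\delta$ actually originate; asserting ``analogous control \dots with the $m$--dependence that ultimately produces the prefactor'' assumes the conclusion. Third, the four error exponents $N^{-\delta/2}$, $M^{-1/2}$, $M^{3/2}N^{-1/3+\delta}$, $M^{1/4}N^{-1/6}$ are each the output of a specific quantitative estimate (e.g.\ the patch--counting and the $N^{-\delta}$ cut--off in \cref{eq:delta} for the first, the linearization of $e(p)-e(h)$ over patches of diameter $\sim N^{1/3}M^{-1/2}$ together with the division by $\hbar$ for the third); saying that inserting the bounds ``gives exactly a bound of the form claimed'' is not a derivation, and without carrying out those estimates one cannot even verify that the $1/\hbar\sim N^{1/3}$ from the Duhamel formula is actually beaten. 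As it stands the proposal correctly reproduces the architecture the paper describes but establishes none of the quantitative statements that make \cref{eq:main-1-weaker-bb} true.
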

\begin{rems}
 \begin{enumerate}
 \item The states $RT\xi$ and $RT\xi(t)$ are $N$--particle states. In particular the action of the second quantized $\Hcal$ agrees with the action of $H$ on these states. This follows since the pair operators $c^*_\alpha(k)$ create equal numbers of particles $p \in \BFc$ and holes $h \in \BF$. More precisely, with $\Ncal^\textnormal{p} := \sum_{p \in \BFc} a^*_p a_p$ and $ \Ncal^\textnormal{h} := \sum_{h \in \BF} a^*_h a_h$ one has $(\Ncal^\textnormal{p} - \Ncal^\textnormal{h}) T\xi = 0$, which implies
 \[\Ncal RT \xi = R(R^*\Ncal R)T\xi = R(\Ncal^\textnormal{p} - \Ncal^\textnormal{h} + N)T\xi = R N T\xi = NRT\xi \;. \]
 \item In \cite{BNP+22} we specialized to the number of patches $M:=N^{4\delta}$ and the cut--off parameter $\delta :=2/45$ (entering in \cref{eq:delta}). The present form is more general since it also describes the evolution of initial data given with non--optimal choice of $M$ and $\delta$. (But of course the theorem is only of interest when the error estimate is $\ll 1$.)
 \item We avoided some trivial contributions to the error by keeping $\widetilde{E}^\textnormal{RPA}_N$ instead of replacing it by a (more explicit) integral formula as in \cite{BNP+22}.
 \item The theorem not only provides a stronger approximation than \cref{thm:hf} by employing Fock space norm instead of the trace norm of a reduced density matrix, but also has a better time dependence of the error.
 \item The mentioned improvement comes at a cost: the theorem is only applicable to initial data given in terms of pair excitations over the Fermi ball. According to \cite[Appendix~A]{BNP+21b}, the Fermi ball constitutes the minimizer (due to the scaling limit, in general only a stationary point) of the Hartree--Fock variational problem (i.\,e., minimization of $\langle \psi, H \psi\rangle$ over Slater determinants $\psi$ on the torus) and is thus stationary for the time--dependent Hartree--Fock equation \cref{eq:hf}. The theorem  does not apply, e.\,g., to the harmonically confined Fermi gas, where we reach only the Hartree--Fock precision of the previous section. 
 \item Only the pair excitations have a non--trivial evolution, the Fermi ball remains stationary. The spectrum of pair excitations has been discussed in \cite{Ben21,CHN21}. 
 \item Different bosonization concepts appear in the analysis of low--density Fermi gases \cite{FGHP21}, spin systems \cite{CG12,CGS15,Ben17,NS21a}, and one--dimensional fermionic systems \cite{ML65,LLMM17,LLMM17a}.
 \end{enumerate}   
\end{rems}
  
\paragraph{Concluding Remarks}
I have described three levels of approximation for the dynamics of the fermionic many--body problem at high densities. While providing increasingly precise results (from approximation of the Wigner transform to approximation of reduced density matrices in trace norm to a Fock space norm approximation) we have also seen the role of the initial conditions, such as regularity of the Wigner transform when deriving the Vlasov equation, semiclassical commutator bounds for the validity of Hartree--Fock theory, and the initial data consisting of pair excitations over a stationary Fermi ball for the RPA. Moreover we have seen that the generalization of these assumptions still provides a number of important questions on which further progress would be desirable.
 
\section*{Acknowledgements and Declarations}
The author has been supported by the Gruppo Nazionale per la Fisica Matematica (GNFM) of the Istituto Nazionale di Alta Matematica ``Francesco Severi'' (INdAM) in Italy and the European Research Council (ERC) through the Starting Grant \textsc{FermiMath}, grant agreement nr.~101040991. The author does not have any conflicts of interest to disclose. Data sharing is not applicable to this article as no new data were created or analyzed.

\bibliographystyle{alpha}  
\bibliography{semiclassicalstructure}{}   

\end{document}